\documentclass[11pt,a4paper]{article}
\usepackage{indentfirst,mathrsfs}
\usepackage{amsfonts,amsmath,amssymb,amsthm}
\usepackage{latexsym,amscd}
\usepackage{amsbsy}
\usepackage{color}
\usepackage[noadjust]{cite}
\usepackage{multirow}
\usepackage{makecell}
\usepackage{float}




\newtheorem{thm}{Theorem}
\newtheorem{lem}{Lemma}

\newtheorem{prop}{Proposition}

\newtheorem{example}{Example}
\newtheorem{remark}{Remark}

\parskip 5pt
\setlength{\oddsidemargin}{0cm} \setlength{\evensidemargin}{0cm}
\setlength{\textwidth}{16cm} \setlength{\textheight}{21cm}
\begin{document}
\title{Several classes of bent functions over finite fields}
\author{Xi Xie, Nian Li, Xiangyong Zeng, Xiaohu Tang, Yao Yao
\thanks{The authors are with the Hubei Key Laboratory of Applied Mathematics, Faculty of Mathematics and Statistics, Hubei
University, Wuhan, 430062, China. Xiaohu Tang is also with the Information Security and National Computing Grid Laboratory, Southwest Jiaotong University, Chengdu, 610031, China. Email: xi.xie@aliyun.com, nian.li@hubu.edu.cn, xzeng@hubu.edu.cn, xhutang@swjtu.edu.cn, yyao1966@163.com}
}
\date{}
\maketitle
\begin{quote}
{{\bf Abstract:} Let $\mathbb{F}_{p^{n}}$ be the finite field with $p^n$ elements and $\operatorname{Tr}(\cdot)$ be the trace function from $\mathbb{F}_{p^{n}}$ to $\mathbb{F}_{p}$, where $p$ is a prime and $n$ is an integer. Inspired by the works of Mesnager (IEEE Trans. Inf. Theory 60(7): 4397-4407, 2014) and Tang et al. (IEEE Trans. Inf. Theory 63(10): 6149-6157, 2017), we study a class of bent functions of the form $f(x)=g(x)+F(\operatorname{Tr}(u_1x),\operatorname{Tr}(u_2x),\cdots,\operatorname{Tr}(u_{\tau}x))$, where $g(x)$ is a function from $\mathbb{F}_{p^{n}}$ to $\mathbb{F}_{p}$,  $\tau\geq2$ is an integer, $F(x_1,\cdots,x_n)$ is a reduced polynomial in $\mathbb{F}_{p}[x_1,\cdots,x_n]$ and $u_i\in \mathbb{F}^{*}_{p^n}$ for $1\leq i \leq \tau$. As a consequence, we obtain a generic result on the Walsh transform of $f(x)$ and characterize the bentness of $f(x)$ when $g(x)$ is bent for $p=2$ and $p>2$ respectively. Our results generalize some earlier works. In addition, we study the construction of bent functions $f(x)$ when $g(x)$ is not bent for the first time and present a class of bent functions from non-bent Gold functions.
}

{ {\bf Keywords:}} Algebraic degree, Bent function, Walsh transform.
\end{quote}

\section{Introduction}
Boolean bent functions were first introduced by Rothaus in 1976 \cite{R} as an interesting combinatorial object with maximum Hamming distance to the set of all affine functions. Over the last four decades, bent functions have attracted a lot of research interest due to their important applications in cryptography \cite{C2010}, sequences \cite{OSW} and coding theory \cite{CHLL,DFZ}. Kumar, Scholtz and Welch in \cite{KSW} generalized the notion of Boolean bent functions to the case of functions over an arbitrary finite field.

Given a function $f(x)$ mapping from $\mathbb{F}_{p^{n}}$ to $\mathbb{F}_{p}$, the Walsh transform of $f(x)$ is defined by
$$\widehat{f}(b)=\sum\nolimits_{x\in\mathbb{F}_{p^n}}{\omega}^{f(x)-\operatorname{Tr}(bx)}, \, b\in\mathbb{F}_{p^n},$$
where $\omega=e^{\frac{2\pi\sqrt{-1}}{p}}$ is a complex primitive $p$-th root of unity.
According to \cite{KSW}, $f(x)$ is called a $p$-ary bent function if all its Walsh coefficients satisfy $\big|\widehat{f}(b) |=p^{n/2}$. A $p$-ary bent function $f(x)$ is called regular if $\widehat{f}(b)=p^{n/2}\omega^{\widetilde{f}(b)}$ holds for some function $\widetilde{f}(x)$ mapping $\mathbb{F}_{p^n}$ to $\mathbb{F}_p$, and it is called weakly regular if there exists a complex $\mu$ having unit magnitude such that $\widehat{f}(b)=\mu^{-1}p^{n/2}\omega^{\widetilde{f}(b)}$ for all $b\in\mathbb{F}_{p^n}$. The function $\widetilde{f}(x)$ is called the dual of $f(x)$ and it is also bent.

An interesting class of bent functions over finite fields with the form
\begin{equation}\label{f=g+F}
f(x)=g(x)+F(\operatorname{Tr}(u_1x),\operatorname{Tr}(u_2x),\cdots,\operatorname{Tr}(u_{\tau}x))
\end{equation}
was studied in the past years, where $g(x)$ is a function from $\mathbb{F}_{p^{n}}$ to $\mathbb{F}_{p}$, $F(x_1,\cdots,x_{\tau})$ is an arbitrary reduced polynomial in $\mathbb{F}_p[x_1,\cdots,x_{\tau}]$, $\tau\geq2$ is an integer and $u_i\in \mathbb{F}^{*}_{p^n}$ for all $1\leq i \leq \tau$. The initial work on this issue is due to Mesnager \cite{M} who studied the case $p=2$, $F(x_1,x_2)=x_1x_2$ and $g(x)$ is a bent function whose dual function has a null second order derivative. This motivated Xu et al. to construct bent functions by using some known bent functions $g(x)$ via the cases $F(x_1,x_2,x_3)=x_1x_2x_3$ for $p=2$ \cite{XCX} and $F(x_1,x_2)=x_1x_2$ for $p=3$ \cite{XCX1,XCX2}, respectively. Later, Wang et al. characterized the bentness of $f(x)$ when $g(x)$ is bent and $F(x_1,\cdots,x_{\tau})=x_1\cdots x_{\tau}$ for $p=2$ and consequently constructed bent functions of the form \eqref{f=g+F} from some bent functions $g(x)$ whose dual functions are already known \cite{WWL}. Meanwhile, also for $p=2$, Tang et al. \cite{T} investigated the bentness of $f(x)$ with the form \eqref{f=g+F} for an arbitrary reduced polynomial $F(x_1,\cdots,x_{\tau})$ and a bent function $g(x)$ whose dual satisfies
\begin{equation}\label{g}
\widetilde{g}(x-\sum\nolimits_{i=1}^{\tau}u_it_i)=\widetilde{g}(x)+\sum\nolimits_{i=1}^{\tau}g_i(x)t_i,
\end{equation}
where $t_i\in\mathbb{F}_{p}$ and $g_{i}(x)$ is a function from $\mathbb{F}_{p^{n}}$ to $\mathbb{F}_{p}$ for each $1\leq i \leq \tau$. The analogues of the results in \cite{T} for an odd prime $p$ were obtained in \cite{QTZF} where $g(x)$ was required to be a homogeneous quadratic bent function and its dual also satisfies \eqref{g}. In 2019, Zheng et al. \cite{ZPKL} showed that for $p=2$, $f(x)$ of the form \eqref{f=g+F} is bent for any reduced polynomial $F(x_1,\cdots,x_{\tau})$ if and only if $g(x)$ is bent whose dual satisfies \eqref{g}.

Inspired by the above works, in this paper, we further study the construction of bent functions with the form \eqref{f=g+F}. We first derive a generic result on Walsh transform of the function $f(x)$ of the form \eqref{f=g+F} in which $g(x)$ is not necessarily bent. Then we characterize the bentness of $f(x)$ in \eqref{f=g+F} when $g(x)$ is bent whose dual satisfies
\begin{equation}\label{g-Aij}
\widetilde{g}(x-\sum\nolimits_{i=1}^{\tau}u_it_i)=\widetilde{g}(x)+\sum\nolimits_{1\leq i\leq j\leq \tau}A_{ij}t_it_j+\sum\nolimits_{i=1}^{\tau}g_i(x)t_i
\end{equation}
for some $A_{ij}\ne 0$. Our results generalize some earlier works in this direction. In addition, we attempt to construct bent functions $f(x)$ having the form \eqref{f=g+F} from non-bent functions $g(x)$ and consequently obtain a class of such bent functions by using non-bent Gold functions and $F(x_1,x_2)=x_1x_2$. To the best of our knowledge, the construction of bent functions with the form \eqref{f=g+F} from either bent functions satisfying \eqref{g-Aij} for some $A_{ij}\ne 0$ or non-bent functions is studied in this paper for the first time in the literature.

The rest of this paper is organized as follows. Section \ref{prel} gives some preliminaries. Section \ref{Walsh} provides some results on the Walsh transform of functions with the form \eqref{f=g+F} and characterizes the bentness of such functions when $g(x)$ satisfies \eqref{g-Aij} for some nonzero $A_{ij}$. Section \ref{cons2} proposes a class of bent functions of the form \eqref{f=g+F} in which $g(x)$ is a non-bent Gold function and $F(x_1,x_2)=x_1x_2$. Section \ref{conc} concludes this paper.

\section{Preliminaries}\label{prel}
Throughout this paper, let $\mathbb{F}_{p^n}$ denote the finite field with $p^n$ elements, where $n$ is a positive integer and $p$ is a prime. The trace function from
$\mathbb{F}_{p^n}$ to its subfield $\mathbb{F}_{p^k}$ is defined by $\operatorname{Tr}_{k}^{n}(x)=\sum_{i=0}^{n/k-1}x^{p^{ik}}$. In particular, when $k=1$, we use the notation $\operatorname{Tr}(x)$ instead of $\operatorname{Tr}_{1}^{n}(x)$.

\subsection{Algebraic degree}
A function $F(x_1,\cdots,x_n):\mathbb{F}_{p}^{n}\mapsto\mathbb{F}_p$ is often represented by its algebraic normal form:
\begin{equation}\label{F(x_1,...,x_n)}
F(x_1,\cdots,x_n)=\sum_{e=(e_1,\cdots,e_n)\in\mathbb{F}_{p}^n}a(e) (\prod_{i=1}^nx_i^{e_i} ),\,\,\,a(e)\in \mathbb{F}_{p}.
\end{equation}
A polynomial in $\mathbb{F}_p[x_1,\cdots,x_n]$ with the form \eqref{F(x_1,...,x_n)} is called a reduced polynomial.
The algebraic degree of $F(x_1,\cdots,x_{n})$, denoted by ${\rm deg}(F)$, is defined as
${\rm deg}(F)={\rm max}_{e\in\mathbb{F}_{p}^n}\{\sum_{i=1}^n e_i: a(e)\neq 0\}$, where $e=(e_1,\cdots,e_n)$. The following lemma will be used to determine the algebraic degree of some reduced functions, which is a direct generalization of the result proposed in \cite[Lemma 2.1]{T}.

\begin{lem}\label{algebraic degree}
Let $u_{1}, u_{2}, \cdots, u_{\tau}\in\mathbb{F}_{p^{n}}^{*}$ be linearly independent over $\mathbb{F}_{p},$ where $\tau$ is an integer with $1\leq\tau\leq n.$ Let $F(x_1,\cdots,x_{\tau})$ be a reduced polynomial in $\mathbb{F}_p[x_1,\cdots,x_{\tau}]$ of algebraic degree $d$. Then the following univariate function
$$F (\operatorname{Tr}(u_1x),\operatorname{Tr}(u_2x),\cdots,\operatorname{Tr}(u_{\tau}x) )$$
has algebraic degree d.
\end{lem}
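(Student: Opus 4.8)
The idea is to express the univariate function $h(x) = F(\operatorname{Tr}(u_1x),\dots,\operatorname{Tr}(u_\tau x))$ in its univariate polynomial form over $\mathbb{F}_{p^n}$ and track the exponents of maximal weight. First I would recall that every function from $\mathbb{F}_{p^n}$ to $\mathbb{F}_p$ has a unique representation as $\sum_{k} c_k x^k$ with $0\le k\le p^n-1$ (plus possibly a constant), and that its algebraic degree equals $\max\{w_p(k): c_k\ne 0\}$, where $w_p(k)$ is the $p$-ary digit sum of $k$. Since $\operatorname{Tr}(u_ix) = \sum_{j=0}^{n-1}(u_ix)^{p^j}$, each monomial $\prod_{i=1}^\tau \operatorname{Tr}(u_ix)^{e_i}$ with $\sum e_i = d$ (corresponding to a top-degree term $a(e)\ne 0$ of $F$) expands, after reduction via $x^{p^n}=x$, into a sum of terms $x^k$ whose exponents $k$ are sums of $d$ powers $p^{j_1}+p^{j_2}+\cdots+p^{j_d}$ taken modulo $p^n-1$. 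Each such $k$ satisfies $w_p(k)\le d$, so $\deg(h)\le d$ is immediate; the substance is the reverse inequality.

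For the lower bound, the plan is to exhibit a surviving coefficient $c_k$ with $w_p(k)=d$. The natural candidate is $k = 1+p+p^2+\cdots+p^{d-1}$, obtained from the expansion by choosing the $d$ distinct Frobenius exponents $0,1,\dots,d-1$. I would collect, across all top-degree monomials of $F$ and all orderings, the full contribution to the coefficient of $x^{1+p+\cdots+p^{d-1}}$. Because the $u_i$ are linearly independent over $\mathbb{F}_p$ — and hence so are the products arising from distinct multisets of the $u_i^{p^j}$ — this coefficient turns out to be a nonzero $\mathbb{F}_p$-linear combination of products of conjugates of the $u_i$; more precisely it is a symmetric expression that reduces to the top-degree part of $F$ evaluated in a way that cannot vanish. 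Concretely, one can compare with the single-variable case or use the fact that the map sending $F$ to the degree-$d$ homogeneous part of $h$ is injective on homogeneous-degree-$d$ polynomials, which follows from the linear independence of the $\operatorname{Tr}(u_ix)$ as linear functionals. An alternative, cleaner route is to proceed exactly as in the proof of \cite[Lemma 2.1]{T}: restrict attention to the linearized situation, use a change of variables to assume a convenient form for the $u_i$, and isolate a monomial of the shape $x^{p^{j_1}+\cdots+p^{j_d}}$ with all $j_\ell$ distinct whose coefficient is forced to be nonzero by the nondegeneracy coming from linear independence.

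The main obstacle I anticipate is the reverse inequality, specifically ruling out cancellation among the many expansions that contribute to the top-weight exponents: different top-degree monomials of $F$, and different choices of Frobenius exponents within each, can produce the same exponent $k \pmod{p^n-1}$, and one must argue their combined coefficient does not vanish. The linear independence of $u_1,\dots,u_\tau$ over $\mathbb{F}_p$ is exactly what prevents this — it guarantees that the relevant products of Frobenius conjugates are themselves $\mathbb{F}_p$-linearly independent, so a nontrivial $\mathbb{F}_p$-combination (with coefficients the $a(e)$) stays nonzero. I would handle this by picking the exponent $k$ with $d$ distinct, sufficiently spread-out Frobenius powers so that no other multiset of size $\le d$ of powers $p^j$ reduces to the same residue modulo $p^n-1$, thereby isolating a single clean contribution equal to (a nonzero multiple of) the corresponding coefficient of $F$. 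This reduces the whole matter to the observation that $F$ has at least one nonzero coefficient $a(e)$ with $\sum e_i = d$, which is the definition of $\deg(F)=d$, and the generalization over \cite{T} from $p=2$ to arbitrary $p$ is routine once the digit-sum bookkeeping is set up as above.
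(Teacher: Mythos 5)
Your main route has a genuine gap at exactly the point you flag as the obstacle. The justification you offer for non-cancellation --- that linear independence of $u_1,\dots,u_\tau$ over $\mathbb{F}_p$ forces the relevant products of Frobenius conjugates to be $\mathbb{F}_p$-linearly independent --- is false in general: there are far more such products than $n=\dim_{\mathbb{F}_p}\mathbb{F}_{p^n}$, so they cannot all be independent, and concrete cancellations do occur. Correspondingly, your candidate exponent $k=1+p+\cdots+p^{d-1}$ need not survive. For example, with $p=2$, $F=x_1x_2+x_3x_4$, the coefficient of $x^{1+p}$ is $u_1u_2^2+u_1^2u_2+u_3u_4^2+u_3^2u_4=u_1u_2(u_1+u_2)+u_3u_4(u_3+u_4)$, and one can choose linearly independent $u_1,u_2,u_3,u_4$ in a large field with $u_1u_2(u_1+u_2)=u_3u_4(u_3+u_4)$, killing that coefficient even though the function does have degree $2$ (witnessed by a different weight-$2$ exponent). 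Your proposed repair --- choosing ``spread-out'' Frobenius indices --- addresses collisions between different multisets of powers of $p$, which is not the problem; the cancellation happens among the contributions of different top-degree monomials of $F$ to one and the same exponent, and spreading the indices only conjugates all those contributions simultaneously. Note also that when $d>n$ (possible since $d$ can be as large as $(p-1)\tau$) the exponent $1+p+\cdots+p^{d-1}$ does not even exist with distinct digits, so the candidate must anyway involve repeated digits.

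The clean argument is the one you mention only in passing: since $u_1,\dots,u_\tau$ are linearly independent, the linear forms $x\mapsto\operatorname{Tr}(u_ix)$ are linearly independent functionals and can be completed to an $\mathbb{F}_p$-coordinate system $y_1,\dots,y_n$ of $\mathbb{F}_{p^n}$; in these coordinates the function is literally the reduced polynomial $F(y_1,\dots,y_\tau)$, whose algebraic normal form is unique, and algebraic degree is invariant under an invertible $\mathbb{F}_p$-linear change of coordinates, giving degree exactly $d$. That is the substance of \cite[Lemma 2.1]{T}, and indeed the paper itself states the present lemma without proof, as a direct generalization of that result; your write-up would be correct if you promoted this ``alternative route'' to the actual proof and dropped the univariate coefficient-tracking argument, whose key non-vanishing claim does not hold.
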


The algebraic degree of a $p$-ary bent function has been characterized as follows.

\begin{lem}{\rm(\cite[Propositions 4.4 and 4.5]{H})}\label{algebraic degree. limit}
Let $f(x)$ be a bent function from $\mathbb{F}_{p^n}$ to $\mathbb{F}_{p}$, then the algebraic degree
$\deg(f)$ of $f(x)$ satisfies $\deg(f)\leq \frac{(p-1)n}{2}+1$, and if $f(x)$ is weakly regular bent, then $\deg(f)\leq \frac{(p-1)n}{2}$.
\end{lem}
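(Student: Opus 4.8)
The plan is to control the algebraic degree through the $\pi$-adic valuation of the Walsh coefficients, where $\pi=1-\omega$ generates the unique prime of $\mathbb{Z}[\omega]$ above $p$, so that, writing $\nu$ for the $\pi$-adic valuation, $\nu(p)=p-1$. First I would pin down the valuation of every Walsh coefficient. Since $f$ is bent, so is $tf$ for each $t\in\mathbb{F}_p^{*}$: indeed $\widehat{tf}(c)=\sum_{x}\omega^{(tf)(x)-\operatorname{Tr}(cx)}$ is a Galois conjugate of $\widehat{f}(t^{-1}c)$ and hence has modulus $p^{n/2}$. The $p-1$ Galois conjugates of $\widehat{f}(b)$ over $\mathbb{Q}$ are exactly the numbers $\widehat{tf}(tb)$, each of modulus $p^{n/2}$, so $|N_{\mathbb{Q}(\omega)/\mathbb{Q}}(\widehat f(b))|=p^{(p-1)n/2}$, an integer power of $p$ (note $(p-1)n/2\in\mathbb{Z}$). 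As $p$ is totally ramified in $\mathbb{Q}(\omega)$, the ideal $(\widehat f(b))$ is supported only at $\pi$, whence $\nu(\widehat f(b))=(p-1)n/2$ \emph{exactly}, for every $b\in\mathbb{F}_{p^n}$.

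Next I would convert this into a divisibility statement for exponential sums over flats and match it against the degree. For an $\mathbb{F}_p$-subspace $W$ of dimension $k$, expressing the indicator of $W$ through its trace-dual $W^{\perp}$ gives $\sum_{x\in W}\omega^{f(x)}=p^{-(n-k)}\sum_{b\in W^{\perp}}\widehat f(-b)$; since each term has valuation $(p-1)n/2$ and $\nu(p^{n-k})=(p-1)(n-k)$, the sum $\Sigma_W:=\sum_{x\in W}\omega^{f(x)}$ satisfies $\nu(\Sigma_W)\ge (p-1)\big(k-\tfrac n2\big)$. On the other hand, expanding $\omega^{[f(x)]}=(1-\pi)^{[f(x)]}=\sum_j(-\pi)^j\binom{[f(x)]}{j}$ and summing yields $\Sigma_W=\sum_j(-\pi)^j c_j(W)$ with $c_j(W)\equiv\sum_{x\in W}\binom{f(x)}{j}\pmod p$, where $\binom{\cdot}{j}\in\mathbb{F}_p[Y]$ has degree $j$. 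A polynomial of degree $<(p-1)k$ sums to $0$ over $W$ mod $p$, so $c_j(W)\equiv0$ for $jd<(p-1)k$ when $\deg f=d$; moreover the degree-$(p-1)k$ part of $\binom{f}{j}$ equals $\tfrac1{j!}H^{j}$, where $H$ is the homogeneous top part of $f$.

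Choosing $k,j$ with $jd=(p-1)k$, $1\le j\le p-2$, and a subspace $W$ adapted to the support of $H$ so that $\sum_{x\in W}H(x)^{j}\not\equiv0\pmod p$, the index-$j$ term dominates $\pi$-adically (all $j'<j$ give $c_{j'}\equiv0$, hence valuation $\ge p-1>j$) and forces $\nu(\Sigma_W)=j=(p-1)k/d$. Comparing with $\nu(\Sigma_W)\ge(p-1)(k-\tfrac n2)$ then yields $k/d\ge k-\tfrac n2$, and optimizing the admissible pair $(j,k)$ (with $k$ just above $n/2$) gives $d\le\tfrac{(p-1)n}{2}+1$. The hard part will be exactly this third step: guaranteeing integers $j,k$ in the required range with $jd=(p-1)k$, and, more delicately, producing a $k$-dimensional flat $W$ on which the restricted top form $\sum_{x\in W}H^{j}$ is nonzero mod $p$, while controlling cancellation when the minimal nonvanishing index is not strictly below $p-1$; this is precisely the Stickelberger/Ax--Katz type bookkeeping where all the work lies. (For $p=2$ the same subspace identity reduces mod $2$ to a parity computation on $\sum_{x\in W}f(x)$, which gives the bound directly.)

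For the weakly regular refinement I would exploit the explicit normalization $\widehat f(b)=\mu^{-1}p^{n/2}\omega^{\widetilde f(b)}$: here $\widetilde f$ is a genuine $\mathbb{F}_p$-valued function and Fourier inversion expresses $\omega^{f(x)}$ through $\omega^{\widetilde f}$ with the quadratic Gauss sum as the only non-unit factor. Tracking the one extra $\pi$-adic order carried by this Gauss-sum normalization tightens the comparison above by one unit and removes the surplus ``$+1$'', sharpening the bound to $\deg(f)\le\frac{(p-1)n}{2}$.
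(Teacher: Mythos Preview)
The paper does not supply a proof of this lemma; it is quoted verbatim from Hou's paper as a known result, so there is no in-paper argument to compare against.

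Your overall strategy---pinning down $\nu_\pi(\widehat f(b))=(p-1)n/2$ exactly via the norm, then converting this into $\pi$-adic information about sums over flats and playing it off against the degree through the expansion $\omega^{f}=(1-\pi)^{f}$---is the correct circle of ideas, and it is in the same spirit as Hou's proof (Stickelberger-type congruences combined with the exact valuation forced by bentness). Your first two paragraphs are correct as written.

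The genuine gap is exactly where you flag it. To run step~3 you need an integer pair $(j,k)$ with $jd=(p-1)k$ and $1\le j\le p-2$; your optimisation ``$k$ just above $n/2$'' amounts, when $d$ is near the putative bound, to taking $j=1$, which already forces $(p-1)\mid d$, and for general $d$ no admissible $j$ need exist at all. Even when an admissible pair exists, you still owe the construction of a $k$-dimensional $W$ with $\sum_{x\in W}H(x)^{j}\not\equiv 0\pmod p$; after reducing $H^{j}$ via $x_i^{p}=x_i$ the top part can collapse, and there is no soft genericity argument that produces such a $W$ in every case. Hou's proof sidesteps this trap by working directly with the $\pi$-adic expansion of the Walsh coefficients and the ANF of $f$ (via M\"obius inversion), so the divisibility constraint $jd=(p-1)k$ never enters. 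Your plan for the weakly regular refinement---using $\widehat f(b)=\mu^{-1}p^{n/2}\omega^{\widetilde f(b)}$ and the extra $\pi$-adic unit carried by the Gauss-sum normalisation---is the right intuition, but as stated it is only a heuristic; which half-integer becomes an integer, and how that buys exactly one unit of degree, still needs to be spelled out.
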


\subsection{ Certain exponential sums}
For each $b\in\mathbb{F}_{p^n}$, the function $\chi_b(x)=\omega^{{\rm Tr}(bx)}$ defines an additive character for $x\in\mathbb{F}_{p^n}$. The character $\chi:=\chi_1$ is called the canonical additive character of $\mathbb{F}_{p^n}$.

\begin{lem}{\rm(\cite[Theorems 5.15 and 5.33]{LN})} \label{char sum(ax^2+bx+c)}
Let $p$ be an odd prime, $\eta$ be the quadratic multiplicative character of $\mathbb{F}_{p}$ and $p^{*}=(-1)^{\frac{p-1}{2}}p=\eta(-1)p$. Then
$$\sum\nolimits_{x\in\mathbb{F}_{p}}\chi (ax^2+bx )=\eta(a)\sqrt{p^{*}}\omega^{-\frac{b^2}{4a}},\, a\in\mathbb{F}_{p}^{*},\,b\in\mathbb{F}_{p}.$$
\end{lem}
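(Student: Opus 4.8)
The plan is to reduce the sum to a pure quadratic Gauss sum and then invoke the classical evaluation of that sum. Since $p$ is odd and $a\in\mathbb{F}_p^{*}$, the element $2a$ is invertible, so I would first complete the square, writing $ax^2+bx=a\bigl(x+\frac{b}{2a}\bigr)^2-\frac{b^2}{4a}$. As $x\mapsto x+\frac{b}{2a}$ permutes $\mathbb{F}_p$, the substitution $y=x+\frac{b}{2a}$ gives
$$\sum_{x\in\mathbb{F}_p}\chi(ax^2+bx)=\omega^{-\frac{b^2}{4a}}\sum_{y\in\mathbb{F}_p}\chi(ay^2),$$
which already produces the predicted factor $\omega^{-b^2/(4a)}$. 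It then remains only to show $\sum_{y\in\mathbb{F}_p}\chi(ay^2)=\eta(a)\sqrt{p^{*}}$.

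Next I would express this pure quadratic sum through the quadratic character $\eta$. Using the standard count that, for each $c\in\mathbb{F}_p$, the equation $y^2=c$ has exactly $1+\eta(c)$ solutions (with the convention $\eta(0)=0$), grouping the sum by the value $c=y^2$ yields
$$\sum_{y\in\mathbb{F}_p}\chi(ay^2)=\sum_{c\in\mathbb{F}_p}\bigl(1+\eta(c)\bigr)\chi(ac)=\sum_{c\in\mathbb{F}_p}\chi(ac)+\sum_{c\in\mathbb{F}_p}\eta(c)\chi(ac).$$
The first sum vanishes because $a\neq0$, and in the second the substitution $c\mapsto a^{-1}c$ together with $\eta(a^{-1})=\eta(a)$ extracts the factor $\eta(a)$, reducing everything to the quadratic Gauss sum $G:=\sum_{c\in\mathbb{F}_p}\eta(c)\chi(c)$. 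Thus $\sum_{y}\chi(ay^2)=\eta(a)\,G$, and the lemma collapses to the single assertion $G=\sqrt{p^{*}}$.

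Computing $G^2$ is the easy half. Expanding $G^2=\sum_{c,d}\eta(cd)\chi(c+d)$ and substituting $d=ct$ over $c\neq0$ gives $G^2=\sum_{t}\eta(t)\sum_{c}\chi\bigl(c(1+t)\bigr)$, whose inner sum equals $p$ when $t=-1$ and $0$ otherwise; hence $G^2=\eta(-1)p=p^{*}$. This pins down $G$ up to a sign.

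The main obstacle is exactly that remaining sign: proving $G=+\sqrt{p^{*}}$ rather than $-\sqrt{p^{*}}$. This is Gauss's celebrated sign determination, and I expect it to be the genuinely delicate step, since the elementary character-sum manipulations above are blind to the sign. I would settle it by one of the classical routes — the functional equation of the theta function, Schur's evaluation of the eigenvalues (equivalently, the trace) of the finite Fourier transform matrix, or Gauss's original lattice-point argument — all of which give $G=\sqrt{p}$ for $p\equiv1\pmod4$ and $G=i\sqrt{p}$ for $p\equiv3\pmod4$, that is $G=\sqrt{p^{*}}$ in both cases, completing the proof.
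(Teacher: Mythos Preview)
Your argument is correct: completing the square, rewriting the pure quadratic sum via the count $\#\{y:y^2=c\}=1+\eta(c)$, and reducing to the quadratic Gauss sum $G$ with $G^2=p^{*}$ is the standard route, and you rightly identify the sign $G=+\sqrt{p^{*}}$ as the only nontrivial ingredient, deferring it to one of the classical proofs.

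For comparison: the paper does not prove this lemma at all. It is stated with a citation to Lidl--Niederreiter \cite[Theorems~5.15 and~5.33]{LN} and used as a black box. Theorem~5.33 there is precisely the completing-the-square reduction you carry out, and Theorem~5.15 is the evaluation $G=\sqrt{p^{*}}$ (whose sign Lidl--Niederreiter themselves quote from elsewhere). So your proposal supplies exactly the textbook proof that the paper merely references; there is no alternative approach in the paper to compare against.
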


\begin{lem}\label{expH.k=2}
Let $p$ be an odd prime and $a_i\in\mathbb{F}_{p}$ for $i=1,2,3,4,5$. Denote
$$H=\sum\nolimits_{(x, y)\in\mathbb{F}_{p}^2}\omega^{a_1 x^{2}+a_2 y^{2}+a_3 xy+a_4 x+a_5 y}.$$
If $a_3^2-4a_1a_2=0$, then
$$H=\left \{\begin{array}{lll}
p^2, & {\rm if} \,\, a_1=0,\,a_2=a_4=a_5=0,\\[0.05in]
\eta(a_2)p\sqrt{p^*}\omega^{-\frac{a_5^2}{4a_2}}, &{\rm if} \,\, a_1=0,\, a_2\neq0,\,a_4=0,\\[0.05in]
\eta(a_1)p\sqrt{p^*}\omega^{-\frac{a_4^2}{4a_1}}, &{\rm if} \,\, a_1\neq0,\, a_5=\frac{a_3a_4}{2a_1},\\[0.05in]
0, &{\rm otherwise},
\end{array}\right .$$
and if $a_3^2-4a_1a_2\neq0$, then
$$H=\eta (a_3^2-4a_1a_2 )p\omega^{\frac{a_2a_4^2+a_1a_5^2-a_3a_4a_5}{a_3^2-4a_1a_2}}.$$
\end{lem}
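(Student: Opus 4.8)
The plan is to evaluate the sum $H=\sum_{(x,y)\in\mathbb{F}_p^2}\omega^{Q(x,y)}$ where $Q(x,y)=a_1x^2+a_2y^2+a_3xy+a_4x+a_5y$ by completing squares and reducing to the one-variable Gauss sum in Lemma \ref{char sum(ax^2+bx+c)}. The natural case split is exactly the dichotomy on the discriminant $\Delta=a_3^2-4a_1a_2$ of the quadratic part, since $\Delta$ governs whether the quadratic form is degenerate.

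\textbf{Nondegenerate case $\Delta\neq 0$.} Here at least one of $a_1,a_2$ is nonzero; by the symmetry $x\leftrightarrow y$ (which swaps $a_1\leftrightarrow a_2$ and $a_4\leftrightarrow a_5$) assume $a_1\neq 0$. First fix $y$ and apply Lemma \ref{char sum(ax^2+bx+c)} to the inner sum over $x$ of $\omega^{a_1x^2+(a_3y+a_4)x}$, which contributes $\eta(a_1)\sqrt{p^*}\,\omega^{-(a_3y+a_4)^2/(4a_1)}$. Expanding the exponent leaves a quadratic in $y$ with leading coefficient $a_2-a_3^2/(4a_1)=-\Delta/(4a_1)\neq 0$, so Lemma \ref{char sum(ax^2+bx+c)} applies again to the outer sum over $y$. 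Collecting the two Gauss-sum factors gives $\eta(a_1)\eta(-\Delta/(4a_1))\,p^*\cdot\omega^{(\cdots)}=\eta(\Delta)\eta(-1)\eta(-1)p\cdot\omega^{(\cdots)}=\eta(\Delta)p\,\omega^{(\cdots)}$ after using $\eta$ multiplicative, $\eta(4)=1$, $\eta(1/a_1)=\eta(a_1)$, and $p^*=\eta(-1)p$. The remaining work is purely algebraic: simplify the accumulated exponent to $(a_2a_4^2+a_1a_5^2-a_3a_4a_5)/\Delta$. I would double-check this last simplification by verifying it is symmetric under $x\leftrightarrow y$ and by testing a diagonal case $a_3=0$.

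\textbf{Degenerate case $\Delta=0$.} Now $a_3^2=4a_1a_2$. Subcase $a_1=0$ forces $a_3=0$, so $Q=a_2y^2+a_4x+a_5y$ and the sum factors as $\big(\sum_x\omega^{a_4x}\big)\big(\sum_y\omega^{a_2y^2+a_5y}\big)$; the first factor is $p$ if $a_4=0$ and $0$ otherwise, and the second is handled by Lemma \ref{char sum(ax^2+bx+c)} (giving $p$ when also $a_2=a_5=0$, or $\eta(a_2)\sqrt{p^*}\,\omega^{-a_5^2/(4a_2)}$ when $a_2\neq 0$), producing the first two listed branches. Subcase $a_1\neq 0$: write $a_2=a_3^2/(4a_1)$, so the quadratic part is the perfect square $a_1(x+\frac{a_3}{2a_1}y)^2$. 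Substituting $z=x+\frac{a_3}{2a_1}y$ (a bijection on $\mathbb{F}_p$ for each fixed $y$) turns the exponent into $a_1z^2+a_4z+\big(a_5-\frac{a_3a_4}{2a_1}\big)y$, so the sum factors as $\big(\sum_z\omega^{a_1z^2+a_4z}\big)\big(\sum_y\omega^{(a_5-a_3a_4/(2a_1))y}\big)$. The $y$-sum vanishes unless $a_5=\frac{a_3a_4}{2a_1}$, in which case it equals $p$, and the $z$-sum is $\eta(a_1)\sqrt{p^*}\,\omega^{-a_4^2/(4a_1)}$ by Lemma \ref{char sum(ax^2+bx+c)}; multiplying gives the third branch, and all other configurations give $0$.

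\textbf{Main obstacle.} No step is conceptually hard — the whole argument is two applications of Lemma \ref{char sum(ax^2+bx+c)} plus bookkeeping — so the only real risk is the algebraic simplification of the exponent in the nondegenerate case and keeping the $\eta$-factors, the $p^*$ versus $p$, and the $\omega$-powers consistent; I would be careful that $\eta(-1)^2=1$ is used to clear the two stray $\eta(-1)$'s and that the final exponent is written with denominator $\Delta=a_3^2-4a_1a_2$ exactly as stated. A secondary point is making sure the degenerate subcases are genuinely exhaustive: when $\Delta=0$ and $a_1\neq 0$ one always has $a_2\neq 0$ too, so there is no overlap or gap with the $a_1=0$ subcases.
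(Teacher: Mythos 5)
Your overall strategy---splitting on $\Delta=a_3^2-4a_1a_2$ and reducing to two applications of Lemma~\ref{char sum(ax^2+bx+c)}---is exactly the paper's, and your handling of the degenerate case and of the nondegenerate case with $a_1\neq0$ is correct (the bookkeeping $\eta(a_1)\eta(-\Delta/(4a_1))p^*=\eta(\Delta)p$ works out as you expect). The genuine gap is the opening claim of the nondegenerate case: $\Delta\neq0$ does \emph{not} imply that one of $a_1,a_2$ is nonzero, since $a_1=a_2=0$ with $a_3\neq0$ gives $\Delta=a_3^2\neq0$. In that configuration the $x\leftrightarrow y$ symmetry cannot manufacture a nonzero leading coefficient, so your first application of Lemma~\ref{char sum(ax^2+bx+c)} (which requires $a\in\mathbb{F}_{p}^{*}$) is unavailable and the subcase is simply not covered; note also that your proposed sanity check ``test $a_3=0$'' would not detect this, because the missing case has $a_3\neq 0$. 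The paper closes it with a direct computation: when $a_1=0$ and $\Delta\neq 0$ (hence $a_3\neq0$), summing over $x$ first forces $y=-a_4/a_3$ and $H=p\,\omega^{a_2a_4^2/a_3^2-a_4a_5/a_3}$, which matches the stated formula because $\eta(a_3^2)=1$; this one line covers all of $a_1=0$, $\Delta\neq0$ (whether or not $a_2=0$), so you could either add it as a third subcase or replace your symmetry reduction by the paper's split ``$a_1=0$ versus $a_1\neq0$.''

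A secondary, harmless slip: in the degenerate case you remark that $\Delta=0$ and $a_1\neq0$ force $a_2\neq0$; this is false (take $a_1\neq0$, $a_2=a_3=0$). Nothing in your argument depends on it---the split $a_1=0$ versus $a_1\neq0$ is already exhaustive and disjoint---but the remark should be removed so that the exhaustiveness claim rests on the correct reason.
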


\begin{proof}
We first consider the case $a_3^2-4a_1a_2=0$. If $a_1=0$, then $a_3=0$ and Lemma \ref{char sum(ax^2+bx+c)} yields
$$H=\sum\limits_{(x, y)\in\mathbb{F}_{p}^2}\omega^{a_4 x+a_2 y^{2}+a_5 y}
= \left\{\begin{array}{lll}
p^2, & {\rm if} \,\, a_2=a_4=a_5=0,\\[0.05in]
p\sqrt{p^*}\eta(a_2)\omega^{-\frac{a_5^2}{4a_2}}, &{\rm if} \,\, a_2\neq0,\,a_4=0,\\[0.05in]
0, &{\rm otherwise}.\end{array}\right .$$
If $a_1\neq 0$, again by Lemma \ref{char sum(ax^2+bx+c)} one obtains
\begin{eqnarray*}
  H = \sqrt{p^*}\eta(a_1)\sum\nolimits_{y\in\mathbb{F}_{p}} \omega^{a_2 y^{2}+a_5 y-\frac{(a_3y+a_4)^2}{4a_1}}
    = \sqrt{p^*}\eta(a_1)\sum\nolimits_{y\in\mathbb{F}_{p}} \omega^{(a_5-\frac{a_3a_4}{2a_1}) y-\frac{a_4^2}{4a_1}}
\end{eqnarray*}
due to $a_3^2-4a_1a_2=0$. This implies that $$H=p\sqrt{p^*}\eta(a_1)\omega^{-\frac{a_4^2}{4a_1}}$$
if $a_5=a_3a_4/2a_1$ and $H=0$ otherwise.

Next, we calculate $H$ for the case $a_3^2-4a_1a_2\neq0$. If $a_1=0$, then $a_3\ne 0$ and it can be readily verified that
$$H=\sum\nolimits_{(x, y)\in\mathbb{F}_{p}^2}\omega^{(a_3y+a_4) x+a_2 y^{2}+a_5 y}=p\omega^{a_2(-\frac{a_4}{a_3})^2+a_5(-\frac{a_4}{a_3}) }.$$
If $a_1\ne0$, then by Lemma \ref{char sum(ax^2+bx+c)}, one gets
\begin{eqnarray*}
  H = \sqrt{p^*}\eta(a_1)\sum\nolimits_{y\in\mathbb{F}_{p}} \omega^{a_2 y^{2}+a_5 y-\frac{(a_3y+a_4)^2}{4a_1}}
    = p^*\eta (4a_1a_2-a_3^2 )\omega^{\frac{a_1}{a_3^2-4a_1a_2} (a_5-\frac{a_3a_4}{2a_1} )^2
    -\frac{a_4^2}{4a_1}}.
\end{eqnarray*}
Then the result follows from $p^{*}=\eta(-1)p$ and $p^*\eta (4a_1a_2-a_3^2 )=p\eta (a_3^2-4a_1a_2 )$. This completes the proof.
\end{proof}

\subsection{Walsh transform of Gold functions}\label{gold}

In this subsection, we present some known results on the Walsh transform of Gold functions which will be used to construct bent functions $f(x)$ of the form \eqref{f=g+F} from non-bent functions $g(x)$ in Section \ref{cons2}.

Coulter studied the Walsh transform of Gold functions for odd $p$ \cite{CRS.1998} and $p=2$ \cite{CRS.1999} respectively. Let $n, k$ be positive integers with $d=\gcd(k,n)$ and $g(x)=\operatorname{Tr}(ax^{p^k+1})$, where $a\in\mathbb{F}_{p^n}^{*}$, then the Walsh transform of $g(x)$ has been determined as below.

\begin{lem}{\rm(\cite[Theorem 4.2]{CRS.1999})}\label{W(g).odd}
Suppose $n/d$ is odd. Then for $b\in\mathbb{F}_{2^n}$,
$$\widehat{g}(b)=\left \{\begin{array}{ll}
0, & {\rm if} \,\, \operatorname{Tr}^{n}_{d}(bc^{-1})\neq1,\\
\pm2^{\frac{n+d}{2}}, &{\rm if} \,\, \operatorname{Tr}^{n}_{d}(bc^{-1})=1,
\end{array} \right.$$
 where $c\in\mathbb{F}_{2^n}^{*}$ is the unique element satisfying $c^{2^k+1}=a$.
\end{lem}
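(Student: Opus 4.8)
\emph{Proof proposal.} Since $p=2$ we have $\omega=-1$, so $\widehat{g}(b)=\sum_{x\in\mathbb{F}_{2^n}}(-1)^{\operatorname{Tr}(ax^{2^k+1}+bx)}$ is a real integer, and the plan is to compute $\widehat{g}(b)^2$ and then take square roots. I would first record a preliminary fact: the hypothesis that $n/d$ is odd forces $\gcd(2^k+1,2^n-1)=1$ (indeed $\gcd(2^k+1,2^k-1)=1$ while $\gcd(2^{2k}-1,2^n-1)=2^{\gcd(2k,n)}-1=2^d-1$ when $n/d$ is odd), hence $x\mapsto x^{2^k+1}$ permutes $\mathbb{F}_{2^n}$ and the element $c$ with $c^{2^k+1}=a$ exists and is unique, as asserted.

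The main tool is the classical squaring-and-substitution identity. Writing $\widehat{g}(b)^2=\sum_{x,y\in\mathbb{F}_{2^n}}(-1)^{\operatorname{Tr}(ax^{2^k+1}+ay^{2^k+1}+b(x+y))}$, substituting $y=x+z$, and expanding $(x+z)^{2^k+1}=x^{2^k+1}+x^{2^k}z+xz^{2^k}+z^{2^k+1}$, the exponent becomes $\operatorname{Tr}\big(x\,\phi(z)\big)+\operatorname{Tr}\big(az^{2^k+1}+bz\big)$, where $\phi(z)=az^{2^k}+(az)^{2^{n-k}}$ is a linearized polynomial, obtained by moving the term $\operatorname{Tr}(ax^{2^k}z)=\operatorname{Tr}(a^{2^{n-k}}xz^{2^{n-k}})$ into the first argument of the trace. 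Summing over $x$ first annihilates every $z$ outside the kernel $V=\{z:\phi(z)=0\}$ and contributes $2^n$ for each $z\in V$, so
$$\widehat{g}(b)^2=2^n\sum_{z\in V}(-1)^{\operatorname{Tr}(az^{2^k+1}+bz)}.$$

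The crux is to pin down $V$, and this is where I expect the main work to lie. Raising the defining relation $az^{2^k}=(az)^{2^{n-k}}$ to the $2^k$-th power gives $a^{2^k}z^{2^{2k}}=az$; the substitution $w=cz$ together with $c^{2^k+1}=a$ collapses this to $w^{2^{2k}}=w$, i.e.\ $w\in\mathbb{F}_{2^{\gcd(2k,n)}}$. The parity hypothesis enters precisely here: $n/d$ odd yields $\gcd(2k,n)=\gcd(k,n)=d$, so $V=c^{-1}\mathbb{F}_{2^d}$ has size $2^d$. (Without the parity hypothesis $V$ would be larger and the spectrum would be different, so this is the essential place the assumption is used.)

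To finish, parametrize $z=c^{-1}t$ with $t\in\mathbb{F}_{2^d}$. Then $az^{2^k+1}=t^{2^k+1}=t^2$ because $d\mid k$ forces $t^{2^k}=t$, and since $n/d$ is odd, $\operatorname{Tr}(t^2)=\operatorname{Tr}_1^d\big(\operatorname{Tr}_d^n(t^2)\big)=\operatorname{Tr}_1^d(t)$; likewise $\operatorname{Tr}(bc^{-1}t)=\operatorname{Tr}_1^d\big(t\operatorname{Tr}_d^n(bc^{-1})\big)$. Substituting back,
$$\widehat{g}(b)^2=2^n\sum_{t\in\mathbb{F}_{2^d}}(-1)^{\operatorname{Tr}_1^d(t(1+\operatorname{Tr}_d^n(bc^{-1})))},$$
which equals $2^{n+d}$ when $\operatorname{Tr}_d^n(bc^{-1})=1$ and $0$ otherwise. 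Since $\widehat{g}(b)$ is a real integer, we conclude $\widehat{g}(b)=0$ if $\operatorname{Tr}_d^n(bc^{-1})\neq1$ and $\widehat{g}(b)=\pm2^{(n+d)/2}$ if $\operatorname{Tr}_d^n(bc^{-1})=1$. This method does not fix the sign, but the statement only asserts the magnitude, so this suffices.
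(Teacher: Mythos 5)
Your argument is correct. Note, however, that the paper does not prove this statement at all: it is quoted verbatim as Coulter's Theorem 4.2 in \cite{CRS.1999}, so there is no in-text proof to match against. Your proposal is therefore a self-contained substitute for the citation, and it follows the classical ``square the exponential sum'' technique for quadratic functions: writing $\widehat{g}(b)^2$ as a double sum, substituting $y=x+z$, summing over $x$ to isolate the kernel $V$ of the linearized polynomial $\phi(z)=az^{2^k}+(az)^{2^{n-k}}$, and then identifying $V=c^{-1}\mathbb{F}_{2^d}$ via $w=cz$ and $\gcd(2k,n)=d$ (which is exactly where the hypothesis that $n/d$ is odd enters, both here and in making $x\mapsto x^{2^k+1}$ a permutation so that $c$ exists and is unique). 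All the individual steps check out: the trace manipulation $\operatorname{Tr}(ax^{2^k}z)=\operatorname{Tr}(a^{2^{n-k}}xz^{2^{n-k}})$, the reduction $az^{2^k+1}=t^2$ and $\operatorname{Tr}(t^2)=\operatorname{Tr}_1^d(t)$ using $n/d$ odd, and the final geometric-series evaluation giving $\widehat{g}(b)^2=2^{n+d}$ or $0$ according to whether $\operatorname{Tr}_d^n(bc^{-1})=1$ or not. The trade-off versus Coulter's original treatment is that the squaring method only yields the magnitude $|\widehat{g}(b)|\in\{0,2^{(n+d)/2}\}$ and cannot resolve the sign $\pm$, whereas Coulter's theorem pins down the value more precisely; since the lemma as stated (and its use in Theorem 7 of this paper) only needs the support condition and the magnitude, your weaker conclusion fully suffices.
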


\begin{lem}{\rm(\cite[Theorems 1 and 2]{CRS.1998.1,CRS.1998} and \cite[Theorem 5.3]{CRS.1999})}\label{W(g).even}
Suppose $n/d$ is even, $n=2m$  and $p$ is a prime. Then for $b\in\mathbb{F}_{p^n}$,
$$\widehat{g}(b)= \left\{\begin{array}{lll}
(-1)^{m/d}p^{m}\overline{\chi}(ax_0^{p^k+1}), & {\rm if}\,\,a^{\frac{p^n-1}{p^d+1}}\not=(-1)^{\frac{m}{d}},\\
(-1)^{m/d+1}p^{m+d}\overline{\chi}(ax_0^{p^k+1}), & {\rm if}\,\,a^{\frac{p^n-1}{p^d+1}}=(-1)^{\frac{m}{d}},\, \operatorname{Tr}^n_{2d}(b/ac^{p^k})=0,\\
0, & {\rm otherwise},
\end{array} \right.$$
where $c\in\mathbb{F}_{p^n}^{*}$ satisfies $a^{p^k}c^{p^{2k}}+ac=0$ and $x_0$ is the solution of $a^{p^k}x^{p^{2k}}+ax=-b^{p^k}$.
\end{lem}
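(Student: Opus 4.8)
\emph{Proof strategy.} Let $\chi$ denote the canonical additive character of $\mathbb{F}_{p^n}$, so that $\widehat{g}(b)=\sum_{x\in\mathbb{F}_{p^n}}\chi(ax^{p^k+1}-bx)$, and set $\phi(z)=a^{p^{n-k}}z^{p^{n-k}}+az^{p^k}$. This is an $\mathbb{F}_p$-linear endomorphism of $\mathbb{F}_{p^n}$, self-adjoint for the trace form because $\operatorname{Tr}(x\phi(z))=\operatorname{Tr}\bigl(a(x^{p^k}z+xz^{p^k})\bigr)$ is symmetric in $x$ and $z$. The first step is to ``complete the square'': the same identity yields $g(x+x_0)=g(x)+\operatorname{Tr}(x\phi(x_0))+g(x_0)$ for every $x_0$, and $\phi(x_0)=-b$ is equivalent (raise to the $p^k$-th power) to $a^{p^k}x_0^{p^{2k}}+ax_0=-b^{p^k}$. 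Hence, if such an $x_0$ exists, substituting $x\mapsto x+x_0$ gives $\widehat{g}(b)=\overline{\chi}(ax_0^{p^k+1})\,\widehat{g}(0)$; if it does not exist, I will show $\widehat{g}(b)=0$. For the latter, expanding $\widehat{g}(b)\overline{\widehat{g}(b)}$ and writing $x=y+z$ gives $|\widehat{g}(b)|^{2}=p^{n}\sum_{z\in V}\chi\bigl(g(z)-\operatorname{Tr}(bz)\bigr)$, where $V=\ker\phi=\{z:a^{p^k}z^{p^{2k}}+az=0\}$; and $g$ vanishes identically on $V$, since for $z\in V$ the element $w:=az^{p^k+1}$ satisfies $w^{p^k}=-w$, whence $w\in\mathbb{F}_{p^{2d}}$ and $\operatorname{Tr}(w)=0$ as $n/d$ is even. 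So the inner sum equals $p^{n}|V|$ if $b\in V^{\perp}$ and $0$ otherwise, and the proof reduces to (i) locating $V$ and $V^{\perp}$ and (ii) evaluating $\widehat{g}(0)$.

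For (i): pick a nonzero $c$ with $a^{p^k}c^{p^{2k}}+ac=0$. Then $z\mapsto z/c$ carries $V$ into the fixed field of $z\mapsto z^{p^{2k}}$, namely $\mathbb{F}_{p^{\gcd(2k,n)}}=\mathbb{F}_{p^{2d}}$ (here $\gcd(2k,n)=2d$ because $n/d$ is even, whence $k/d$ is odd), so $V=c\,\mathbb{F}_{p^{2d}}$ and $\dim_{\mathbb{F}_p}V\in\{0,2d\}$. Moreover $V\neq\{0\}$ exactly when $-a^{1-p^k}$ lies in the image of $z\mapsto z^{p^{2k}-1}$ on $\mathbb{F}_{p^n}^{*}$, i.e.\ when $(-a^{1-p^k})^{(p^n-1)/(p^{2d}-1)}=1$; a routine reduction of the exponents — using $\gcd(2k,n)=2d$, that $k/d$ is odd, and $p^{d}\equiv-1\pmod{p^{d}+1}$ (so $\tfrac{p^{k}-1}{p^{d}-1}\equiv1\pmod{p^{d}+1}$) — turns this into the stated condition $a^{(p^n-1)/(p^d+1)}=(-1)^{m/d}$. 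Since $\phi$ is self-adjoint, $\operatorname{Im}\phi=V^{\perp}$, so an $x_0$ exists iff $b\in V^{\perp}$ (automatically when $V=\{0\}$), and since $ac^{p^k+1}\in\mathbb{F}_{p^{2d}}^{*}$ the element $1/(ac^{p^k})=c/(ac^{p^k+1})$ lies in $c\,\mathbb{F}_{p^{2d}}^{*}$, which rephrases $b\in V^{\perp}$ as $\operatorname{Tr}^{n}_{2d}\bigl(b/(ac^{p^k})\bigr)=0$.

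For (ii): $\widehat{g}(0)=\sum_{x}\chi(ax^{p^k+1})$ is the character sum of the quadratic form $g$ over $\mathbb{F}_p$, whose radical is precisely $V$. Splitting off its nondegenerate part and invoking Lemma~\ref{char sum(ax^2+bx+c)} gives $|\widehat{g}(0)|=p^{(n+\dim V)/2}$, equal to $p^{m}$ when $V=\{0\}$ and to $p^{m+d}$ when $\dim V=2d$; with the first step this already produces the three magnitudes $p^{m}$, $p^{m+d}$, $0$. The remaining point is the sign of the real number $\widehat{g}(0)$, which should be $(-1)^{m/d}$ in the first case and $(-1)^{m/d+1}$ in the second. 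I would extract it from the discriminant of the nondegenerate part of $g$, equivalently from a direct evaluation of $\sum_{x}\chi(\operatorname{Tr}(ax^{p^k+1}))$: as $p^k+1\mid p^{2d}-1$ and $p^{d}\equiv-1\pmod{p^{d}+1}$, this is a ``semiprimitive'' sum that can be pushed down the tower $\mathbb{F}_{p^n}\supset\mathbb{F}_{p^{2d}}\supset\mathbb{F}_{p^d}$ and evaluated via Gauss sums of characters of order dividing $p^d+1$; keeping track of the accumulated quadratic characters and a factor $\eta(-1)$ yields precisely $(-1)^{m/d}$ and $(-1)^{m/d+1}$. Feeding (i) and (ii) back into the formula of the first step gives the three cases of the lemma, with $x_0$ as specified there.

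The crux is the sign computation in step (ii): the support and the magnitude come from elementary linear algebra over $\mathbb{F}_{p^n}$ together with one application of Lemma~\ref{char sum(ax^2+bx+c)}, but separating $(-1)^{m/d}$ from $(-1)^{m/d+1}$ requires the delicate Gauss-sum (Stickelberger-type) analysis of \cite{CRS.1998,CRS.1998.1,CRS.1999}, in which the hypothesis that $n/d$ is even genuinely enters.
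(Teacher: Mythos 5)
The paper does not prove Lemma~\ref{W(g).even} at all: it is quoted directly from Coulter's evaluations of Weil sums, so there is no internal proof to compare yours against. Judged on its own, your reduction is correct and is in fact the standard route to this result: the squaring trick giving $|\widehat{g}(b)|^{2}=p^{n}\#V$ or $0$ according as $b\in V^{\perp}$ or not (using that $g$ vanishes on $V=\ker\phi$), the identification $V=c\,\mathbb{F}_{p^{2d}}$ together with the solvability criterion $a^{(p^n-1)/(p^d+1)}=(-1)^{m/d}$, the translation of $b\in V^{\perp}$ into $\operatorname{Tr}^{n}_{2d}\bigl(b/(ac^{p^k})\bigr)=0$, and the reduction $\widehat{g}(b)=\overline{\chi}(ax_0^{p^k+1})\,\widehat{g}(0)$ when a solution $x_0$ exists. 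Two small repairs: with $\phi(x_0)=-b$ the linear term is killed by the substitution $x\mapsto x-x_0$, not $x+x_0$ (the constant term, hence the final formula, is unchanged); and in the case $V\neq\{0\}$ you should note that $\operatorname{Tr}(ax_0^{p^k+1})$ is independent of the choice of solution, since two solutions differ by an element of $V$, on which $g$ vanishes and which is orthogonal to $b$.

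The genuine gap is exactly where you flag it: the sign of the real number $\widehat{g}(0)$, i.e.\ separating $(-1)^{m/d}p^{m}$ from $(-1)^{m/d+1}p^{m+d}$. That sign determination is the actual substance of Coulter's theorems; your sketch gestures at pushing the sum down the tower $\mathbb{F}_{p^n}\supset\mathbb{F}_{p^{2d}}\supset\mathbb{F}_{p^d}$ and tracking quadratic characters, but carries out none of it and ultimately appeals to the very references the lemma is quoting. Lemma~\ref{char sum(ax^2+bx+c)} only yields the magnitude (and a sign expressed through $\eta$ of a discriminant you never compute), so as a self-contained proof the argument is incomplete precisely at its decisive step. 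As a justification for citing the result — which is all the paper itself does — it is more than adequate, and everything you do prove is sound.
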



\section{Constructions of bent functions of the form \eqref{f=g+F}}\label{Walsh}

In this section, we first derive a generic result on the Walsh transform of $f(x)$ with the form \eqref{f=g+F} in which $g(x)$ is not necessarily bent. Then, we characterize the bentness of $f(x)$ in \eqref{f=g+F} for a bent function  $g(x)$ whose dual satisfies \eqref{g-Aij} with some $A_{ij}\ne0$ for $p=2$ and $p>2$ respectively.


 The Walsh transform of a multivariate function $F(x_1,\cdots,x_{n})$ over $\mathbb{F}_{p}^n$ is
\begin{equation}\label{F1}
\widehat{F}(b_1,\cdots,b_{n})=\sum_{(x_1,\cdots,x_{n})\in\mathbb{F}_{p}^n}
\omega^{F(x_1,\cdots,x_{n})-\sum_{i=1}^nb_ix_i},
\end{equation}
where $(b_1,\cdots,b_{n})\in\mathbb{F}_{p}^n$. Then the inverse Walsh transform of $F(x_1,\cdots,x_{n})$ is given by
\begin{equation}\label{F2}
\omega^{F(x_1,\cdots,x_{n})}=\frac{1}{p^n}\sum_{(b_1,\cdots,b_{n})\in\mathbb{F}_{p}^n}
\omega^{\sum_{i=1}^nb_ix_i}\widehat{F}(b_1,\cdots,b_{n}).
\end{equation}

Using \eqref{F1} and \eqref{F2}, the Walsh transform of $f(x)$ in \eqref{f=g+F} can be expressed as below.

\begin{thm}\label{thm.wf.f=g+F}
Let $f(x)$ be defined as \eqref{f=g+F}, then for any $b\in \mathbb{F}_{p^n}$,
\begin{eqnarray*}
\widehat{f}(b)=\frac{1}{p^{\tau}}\sum_{(t_1,\cdots,t_{\tau})\in\mathbb{F}_{p}^{\tau}}\widehat{F}(t_1,\cdots,t_{\tau})
\widehat{g}(b-\sum\nolimits_{i=1}^{\tau}t_iu_i).
\end{eqnarray*}
In particular, if $F(x_1,\cdots,x_{\tau})=x_1\cdots x_{\tau}$, then
\begin{eqnarray*}
\widehat{f}(b)=\frac{1}{p^{\tau-1}}\sum^{\tau-1}_{i,j=1} \sum^{p-1}_{t_{i}=0} \sum^{p-1}_{x_j=0}
\omega^{-\sum^{\tau-1}_{i=1}x_it_{i}}\widehat{g}(b-\sum\nolimits^{\tau-1}_{i=1}t_{i}u_i
-\prod\nolimits^{\tau-1}_{i=1}x_iu_\tau).
\end{eqnarray*}
\end{thm}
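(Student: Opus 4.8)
The plan is to obtain the generic identity by inserting the inverse Walsh transform \eqref{F2} of $F$ into the definition of $\widehat{f}(b)$, and then to deduce the monomial case $F(x_1,\ldots,x_\tau)=x_1\cdots x_\tau$ from it after evaluating $\widehat{F}$ for this particular $F$. Nothing about $g$ (bentness, regularity) or about the $u_i$ (independence) is needed; the argument is purely a manipulation of finite sums.

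For the first formula I would start from \eqref{F2} with $(x_1,\ldots,x_\tau)$ specialized to $(\operatorname{Tr}(u_1x),\ldots,\operatorname{Tr}(u_\tau x))$, which gives
$$\omega^{F(\operatorname{Tr}(u_1x),\ldots,\operatorname{Tr}(u_\tau x))}=\frac{1}{p^\tau}\sum_{(t_1,\ldots,t_\tau)\in\mathbb{F}_p^\tau}\widehat{F}(t_1,\ldots,t_\tau)\,\omega^{\sum_{i=1}^\tau t_i\operatorname{Tr}(u_ix)}.$$
By $\mathbb{F}_p$-linearity of the trace, $\sum_{i=1}^\tau t_i\operatorname{Tr}(u_ix)=\operatorname{Tr}\big((\sum_{i=1}^\tau t_iu_i)x\big)$. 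Substituting this expansion into $\widehat{f}(b)=\sum_{x\in\mathbb{F}_{p^n}}\omega^{g(x)-\operatorname{Tr}(bx)}\,\omega^{F(\operatorname{Tr}(u_1x),\ldots,\operatorname{Tr}(u_\tau x))}$ and interchanging the two finite summations, the inner sum over $x$ collapses to $\sum_{x\in\mathbb{F}_{p^n}}\omega^{g(x)-\operatorname{Tr}((b-\sum_{i=1}^\tau t_iu_i)x)}=\widehat{g}(b-\sum_{i=1}^\tau t_iu_i)$, which is exactly the claimed identity.

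For the monomial case I would first compute $\widehat{F}(t_1,\ldots,t_\tau)$ when $F(x_1,\ldots,x_\tau)=x_1\cdots x_\tau$. Carrying out the sum over $x_\tau$ in \eqref{F1} first, additive-character orthogonality makes $\sum_{x_\tau\in\mathbb{F}_p}\omega^{x_\tau(x_1\cdots x_{\tau-1}-t_\tau)}$ equal to $p$ when $x_1\cdots x_{\tau-1}=t_\tau$ and $0$ otherwise, so
$$\widehat{F}(t_1,\ldots,t_\tau)=p\sum_{\substack{(x_1,\ldots,x_{\tau-1})\in\mathbb{F}_p^{\tau-1}\\ x_1\cdots x_{\tau-1}=t_\tau}}\omega^{-\sum_{i=1}^{\tau-1}t_ix_i}.$$
Plugging this into the generic formula and using the constraint to eliminate the summation variable $t_\tau$ (replacing each occurrence of $t_\tau$ by $\prod_{i=1}^{\tau-1}x_i$), the extra factor $p$ cancels one power of $p$ in the denominator and what remains is
$$\widehat{f}(b)=\frac{1}{p^{\tau-1}}\sum_{(t_1,\ldots,t_{\tau-1})\in\mathbb{F}_p^{\tau-1}}\ \sum_{(x_1,\ldots,x_{\tau-1})\in\mathbb{F}_p^{\tau-1}}\omega^{-\sum_{i=1}^{\tau-1}x_it_i}\,\widehat{g}\Big(b-\sum_{i=1}^{\tau-1}t_iu_i-\prod_{i=1}^{\tau-1}x_iu_\tau\Big),$$
which is the asserted expression once the compressed notation $\sum_{i,j=1}^{\tau-1}\sum_{t_i}\sum_{x_j}$ in the statement is read as the double sum over all $(t_1,\ldots,t_{\tau-1})$ and $(x_1,\ldots,x_{\tau-1})$ in $\mathbb{F}_p^{\tau-1}$.

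Since every step is a reindexing or an interchange of finite sums plus one use of character orthogonality, there is no genuine analytic difficulty here. The only place that demands care is the bookkeeping in the monomial case: correctly identifying $\widehat{F}$ for $x_1\cdots x_\tau$ and tracking which power of $p$ is left after the constraint is used to remove $t_\tau$. One could instead prove the second identity directly by summing over $x$ first and expanding $\omega^{\operatorname{Tr}(u_\tau x)\prod_{i=1}^{\tau-1}\operatorname{Tr}(u_ix)}$ via orthogonality in an auxiliary variable, but routing it through the generic formula is the shortest path.
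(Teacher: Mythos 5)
Your proposal is correct and follows essentially the same route as the paper: the generic identity is obtained by inserting the inverse Walsh transform \eqref{F2} of $F$, swapping the finite sums, and recognizing the inner sum over $x$ as $\widehat{g}(b-\sum_{i=1}^{\tau}t_iu_i)$, while the monomial case rests on the same orthogonality argument in $x_\tau$ forcing $t_\tau=\prod_{i=1}^{\tau-1}x_i$ and releasing one factor of $p$. The only (immaterial) difference is bookkeeping order: you evaluate $\widehat{F}$ for $x_1\cdots x_\tau$ first and then substitute, whereas the paper expands $\widehat{F}$ via \eqref{F1} inside the generic formula and collapses the $(x_\tau,t_\tau)$ sum there.
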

\begin{proof}
According to \eqref{F2}, for any $b\in \mathbb{F}_{p^n}$, one obtains
$$\begin{aligned} \widehat{f}(b) &=\sum_{x \in \mathbb{F}_{p^n}}\omega^{g(x)+F (\operatorname{Tr}(u_1x),\operatorname{Tr}(u_2x),
\cdots,\operatorname{Tr}(u_{\tau}x) )-\operatorname{Tr}(bx)}
\\ &=
\frac{1}{p^{\tau}}\sum_{(t_1,\cdots,t_{\tau})\in\mathbb{F}_{p}^{\tau}}\sum_{x \in \mathbb{F}_{p^n}}\omega^{g(x)-\operatorname{Tr}(bx)+
\sum_{i=1}^{\tau}\operatorname{Tr}(u_{i}x)t_i}\widehat{F}(t_1,\cdots,t_{\tau})
\\ &=
\frac{1}{p^{\tau}}\sum_{(t_1,\cdots,t_{\tau})\in\mathbb{F}_{p}^{\tau}}\widehat{F}(t_1,\cdots,t_{\tau})
\widehat{g}(b-\sum\nolimits_{i=1}^{\tau}t_iu_i).
\end{aligned}$$

If $F(x_1,\cdots,x_{\tau})=x_1\cdots x_{\tau}$, then by \eqref{F1}, one gets
$$
\begin{aligned} \widehat{f}(b) &=\frac{1}{p^{\tau}}\sum_{(t_1,\cdots,t_{\tau})\in\mathbb{F}_{p}^{\tau}}
\sum_{(x_1,\cdots,x_{\tau})\in\mathbb{F}_{p}^{\tau}}\omega^{x_1\cdots x_{\tau}-\sum_{i=1}^{\tau}t_ix_i}
\widehat{g}(b-\sum\nolimits_{i=1}^{\tau}t_iu_i)
\\ &= \frac{1}{p^{\tau}} \sum_{i,j=1}^{\tau-1} \sum_{t_{i}=0}^{p-1} \sum_{x_{j}=0}^{p-1} \sum_{t_{\tau}=0}^{p-1} \sum_{x_{\tau}=0}^{p-1}
\omega^{(\prod\nolimits_{i=1}^{\tau-1}x_{i} -t_{\tau})x_{\tau}-\sum_{i=1}^{\tau-1}t_ix_i} \widehat{g}(b-\sum\nolimits_{i=1}^{\tau}t_iu_i)
\\ &=
\frac{1}{p^{\tau-1}} \sum_{i,j=1}^{\tau-1} \sum_{t_{i}=0}^{p-1}\sum_{x_{j}=0}^{p-1}  \omega^{-\sum_{i=1}^{\tau-1} x_{i} t_{i}} \widehat{g}(b-\sum\nolimits_{i=1}^{\tau-1} t_{i} u_{i}-\prod\nolimits_{i=1}^{\tau-1} x_{i} u_{\tau})
\end{aligned}
$$
due to the fact that $\sum_{x_{\tau}=0}^{p-1} \omega^{(\prod\nolimits_{i=1}^{\tau-1}x_{i} -t_{\tau})x_{\tau}}=0$ if $t_{\tau}\neq \prod\nolimits_{i=1}^{\tau-1}x_{i}$. This completes the proof.
\end{proof}

\begin{remark}
Note that Theorem \ref{thm.wf.f=g+F} holds for an arbitrary prime $p$ and $g(x)$ is a function from $\mathbb{F}_{p^{n}}$ to $\mathbb{F}_{p}$ which is not necessary to be bent. It generalizes some earlier works:
\begin{enumerate}
 \vspace{-2mm} \item [1).] Theorem \ref{thm.wf.f=g+F} is a generalization of Lemma 1 in \cite{WWL}.
\vspace{-2mm}  \item [2).] If one takes $p=2$, $\tau=2$ and $f(x)=g(x)+\operatorname{Tr}(ux)\operatorname{Tr}(vx)$, where $u,v\in\mathbb{F}_{2^n}^{*}$, then our result gives $\widehat{f}(b)=\frac{1}{2}(\widehat{g}(b)+\widehat{g}(b+u)+\widehat{g}(b+v)-\widehat{g}(b+u+v))$ for any $b\in\mathbb{F}_{2^n}$. When $g(x)$ is a Boolean bent function with the dual $\widetilde{g}(x)$, it can be verified that $f(x)$ is bent if and only if $\widetilde{g}(x)+\widetilde{g}(x+u)+\widetilde{g}(x+v)+\widetilde{g}(x+u+v)=0$.
      This is Corollary 5 in \cite{M}.
 \vspace{-2mm} \item [3).] If one takes $p=2$, $\tau=3$ and $f(x)=g(x)+\operatorname{Tr}(ux)\operatorname{Tr}(vx)\operatorname{Tr}(rx)$, where $u,v,r\in\mathbb{F}_{2^n}^{*}$, then our result gives $\widehat{f}(b)=\frac{1}{4}(3\widehat{g}(b)+\widehat{g}(b+u)+\widehat{g}(b+v)
+\widehat{g}(b+r)+\widehat{g}(b+u+v+r)-\widehat{g}(b+u+v)-\widehat{g}(b+u+r)-\widehat{g}(b+v+r))$ for any $b\in\mathbb{F}_{2^n}$. This is Lemma 1 in \cite{XCX}.
\vspace{-2mm}  \item [4).] If one takes $p=3$, $\tau=2$ and $f(x)=g(x)+\operatorname{Tr}(ux)\operatorname{Tr}(vx)$ for $u,v\in\mathbb{F}_{3^n}^{*}$, then by Theorem \ref{thm.wf.f=g+F} one obtains $\widehat{f}(b)=\frac{1}{3}(\widehat{g}(b)+\widehat{g}(b+u)+\widehat{g}(b-u)
+\widehat{g}(b-v)+\widehat{g}(b+v)+\omega\widehat{g}(b-v+u)+\omega^2\widehat{g}(b-v-u)+\omega^2\widehat{g}(b+v+u)+\omega\widehat{g}(b+v-u)$
for any $b\in\mathbb{F}_{3^n}$, where $\omega$ is a primitive $3$-rd root of unity.
This is exactly Lemma 4 in \cite{XCX1}.
\end{enumerate}
 \end{remark}

\subsection{Bent functions of the form \eqref{f=g+F} for $p=2$}\label{Walsh.g bent.p=2}

Let $p=2$ and $f(x)$ be defined as \eqref{f=g+F}. Tang et al. \cite{T}  proved that $f(x)$ is bent for any reduced polynomial $F(x_1,\cdots,x_{\tau})$ in $\mathbb{F}_p[x_1,\cdots,x_n]$ if $g(x)$ is a bent function whose dual satisfies \eqref{g}. Later, Zheng et al. \cite{ZPKL}  showed that this condition is also necessary. In this section, we consider the bentness of $f(x)$ for a more general bent function $g(x)$, i.e., $g(x)$ satisfies \eqref{g-Aij} for some $A_{ij}\neq0$, where $1\leq i< j\leq \tau$ and $\tau$ is a positive integer.

Suppose that $g(x)$ is a bent function over $\mathbb{F}_{2^{n}}$ and its dual satisfies \eqref{g-Aij}, then for $b\in\mathbb{F}_{2^{n}}$, $t_i\in\mathbb{F}_{2}$ and $u_i\in\mathbb{F}_{2^{n}}^*$, where $i=1,2,\cdots,\tau$,  one gets
 $$\widehat{g}(b-\sum\nolimits_{i=1}^{\tau}t_iu_i)
=2^{\frac{n}{2}}(-1)^{\widetilde{g}(b-\sum\nolimits_{i=1}^{\tau}t_iu_i)},$$
where $\widetilde{g}(x)$ is the dual of $g(x)$.
This together with \eqref{F1} and Theorem \ref{thm.wf.f=g+F} gives
\begin{eqnarray}\label{f^b.p=2}
\widehat{f}(b)&=&\frac{1}{2^{\tau}}\sum_{(t_1,\cdots,t_{\tau})\in\mathbb{F}_{2}^{\tau}}\widehat{F}(t_1,\cdots,t_{\tau})
\widehat{g}(b-\sum\nolimits_{i=1}^{\tau}t_iu_i)\nonumber\\
&=&2^{n/2-\tau}\sum_{i,j=1}^{\tau} \sum_{x_{i}=0}^{1}\sum_{t_{j}=0}^{1}
(-1)^{F(x_1,\cdots,x_{\tau})+\widetilde{g}(b)+\sum\limits_{1\leq i< j\leq \tau}A_{ij}t_it_j+\sum\limits_{i=1}^{\tau}  (g_i(b)+x_i+A_{ii} )t_i},
\end{eqnarray}
where $g_{i}(x)$ is a function from $\mathbb{F}_{2^{n}}$ to $\mathbb{F}_{2}$ for each $1\leq i \leq \tau$.

Now assume that the nonzero elements in $\{A_{ij}:1\leq i<j\leq \tau\}$ are $A_{i_1j_1}, \cdots, A_{i_{\ell}j_{\ell}}$, where $\ell$ is a positive integer, $i_s<j_{s}$ for $1\leq s\leq\ell$ and $1\leq i_1\leq \cdots\leq i_{\ell}< \tau$. For simplicity, denote $h_i=g_i(b)+A_{ii}$ and define
\begin{equation}\label{T}
\Gamma=\{i_1,j_1,\cdots,i_{\ell},j_{\ell}\}.
\end{equation}

Then, we discuss \eqref{f^b.p=2} as the following two cases:

Case I: $\#\Gamma=2\ell$, where $2\leq2\ell\leq \tau$.

For this case,  we have $A_{i_sj_s}=1$ for $1\leq s\leq\ell$ and $A_{ij}=0$ otherwise. Then \eqref{f^b.p=2} becomes
\begin{eqnarray*}
  \widehat{f}(b) &=&2^{n/2-\tau}(-1)^{\widetilde{g}(b)}\sum_{i,j=1}^{\tau} \sum_{x_{i}=0}^{1}\sum_{t_{j}=0}^{1}
(-1)^{F(x_1,\cdots,x_{\tau})+\sum_{s=1}^{\ell}t_{i_s}t_{j_s}+\sum_{i=1}^{\tau}  (h_i+x_i )t_i}  \\
    &=& 2^{n/2-2\ell}(-1)^{\widetilde{g}(b)}\sum_{i,j\in\Gamma} \sum_{x_{i}=0}^{1}\sum_{t_{j}=0}^{1}
(-1)^{F(x_1,\cdots,x_{\tau})|_{x_i=h_i, i\not\in \Gamma}
+\sum\limits_{s=1}^{\ell}(t_{j_s}+ (h_{i_s}+x_{i_s} ))t_{i_s}
+\sum\limits_{s=1}^{\ell} (h_{j_s}+x_{j_s} )t_{j_s}}\\
&=&2^{n/2-\ell}(-1)^{\widetilde{g}(b)}\sum_{i\in\Gamma} \sum_{x_{i}=0}^{1}
(-1)^{F(x_1,\cdots,x_{\tau})|_{x_i=h_i, i\not\in \Gamma}
+\sum\limits_{s=1}^{\ell}  (h_{i_s}+x_{i_s} ) (h_{j_s}+x_{j_s} )},
\end{eqnarray*}
where the second identity holds due to $\sum_{t_{i}=0}^{1} (-1)^{(h_i+x_i) t_{i}}=2$ if $x_i=h_i$ and $0$ otherwise for any $i\not\in \Gamma$.

Then, in this case, $f(x)$ defined by \eqref{f=g+F} can be bent for certain special $F(x_1,\cdots,x_{\tau})$.

\begin{thm}\label{p=2.F1}
Let $n=2m$ and $u_{1}, u_{2}, \cdots, u_{\tau}$ be pairwise distinct elements in $\mathbb{F}_{2^{n}}^{*}$, where $2\leq\tau\leq m$. Let $g(x)$ be a bent function over $\mathbb{F}_{2^n}$ whose dual satisfies \eqref{g-Aij} with $A_{ij}\neq 0$ for $i,j\in\Gamma$ and $A_{ij}=0$ otherwise, where $\Gamma$ is defined by \eqref{T}. If $\#\Gamma=2\ell$ and $F(x_1,\cdots,x_{\tau})$ satisfies
$$F(x_1,\cdots,x_{\tau})|_{x_i=h_i, i\not\in \Gamma}=
\sum\nolimits_{s=1}^{\ell}(F_{i_s}x_{i_s}+F_{j_s}x_{j_s})+F_0$$
for some $F_{i_s}, F_{j_s}, F_0\in \mathbb{F}_2$, where $h_i=g_i(b)+A_{ii}$, then $f(x)$ defined by \eqref{f=g+F} is bent and its Walsh transform at point $b\in\mathbb{F}_{2^n}$ is
$$\widehat{f}(b)=2^{n/2}(-1)^{\widetilde{g}(b)+\sum_{s=1}^{\ell}[ (h_{i_s}+F_{j_s} ) (h_{j_s}
+F_{i_s} )+h_{i_s}h_{j_s}]+F_0}.$$
\end{thm}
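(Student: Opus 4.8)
The plan is to pick up from the expression for $\widehat{f}(b)$ derived in Case~I immediately preceding the statement, namely
$$\widehat{f}(b)=2^{n/2-\ell}(-1)^{\widetilde{g}(b)}\sum_{i\in\Gamma}\sum_{x_{i}=0}^{1}(-1)^{F(x_1,\cdots,x_{\tau})|_{x_i=h_i,\,i\notin\Gamma}+\sum_{s=1}^{\ell}(h_{i_s}+x_{i_s})(h_{j_s}+x_{j_s})},$$
and to substitute the hypothesised affine shape $F(x_1,\cdots,x_{\tau})|_{x_i=h_i,\,i\notin\Gamma}=\sum_{s=1}^{\ell}(F_{i_s}x_{i_s}+F_{j_s}x_{j_s})+F_0$. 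Since $\#\Gamma=2\ell$, the indices $i_1,j_1,\cdots,i_\ell,j_\ell$ are pairwise distinct, so the exponent is a sum of $\ell$ blocks each involving only the pair $(x_{i_s},x_{j_s})$; hence the $2\ell$-fold sum factors and
$$\widehat{f}(b)=2^{n/2-\ell}(-1)^{\widetilde{g}(b)+F_0}\prod_{s=1}^{\ell}S_s,\qquad S_s=\sum_{x_{i_s}=0}^{1}\sum_{x_{j_s}=0}^{1}(-1)^{F_{i_s}x_{i_s}+F_{j_s}x_{j_s}+(h_{i_s}+x_{i_s})(h_{j_s}+x_{j_s})}.$$

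The next step is to evaluate each $S_s$ by hand. Expanding $(h_{i_s}+x_{i_s})(h_{j_s}+x_{j_s})=h_{i_s}h_{j_s}+(F_{i_s}+h_{j_s})x_{i_s}+(F_{j_s}+h_{i_s})x_{j_s}+x_{i_s}x_{j_s}$ inside the exponent and extracting the constant $h_{i_s}h_{j_s}$, one reduces $S_s$ to $(-1)^{h_{i_s}h_{j_s}}$ times a sum of the elementary type $\sum_{x,y\in\mathbb{F}_2}(-1)^{\alpha x+\beta y+xy}$ with $\alpha=F_{i_s}+h_{j_s}$ and $\beta=F_{j_s}+h_{i_s}$. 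Checking the four pairs $(x,y)\in\mathbb{F}_2^2$ gives $\sum_{x,y\in\mathbb{F}_2}(-1)^{\alpha x+\beta y+xy}=2(-1)^{\alpha\beta}$, whence $S_s=2(-1)^{h_{i_s}h_{j_s}+(h_{i_s}+F_{j_s})(h_{j_s}+F_{i_s})}$.

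Finally, multiplying the $\ell$ factors yields $\prod_{s=1}^{\ell}S_s=2^{\ell}(-1)^{\sum_{s=1}^{\ell}[(h_{i_s}+F_{j_s})(h_{j_s}+F_{i_s})+h_{i_s}h_{j_s}]}$, and combining this with the prefactor $2^{n/2-\ell}(-1)^{\widetilde{g}(b)+F_0}$ gives precisely the asserted value of $\widehat{f}(b)$; since this has absolute value $2^{n/2}$ for every $b\in\mathbb{F}_{2^n}$, the function $f$ is bent by definition. I do not expect a genuine obstacle here: the whole content is the identity $\sum_{x,y\in\mathbb{F}_2}(-1)^{\alpha x+\beta y+xy}=2(-1)^{\alpha\beta}$ together with the observation that the hypothesis on $F$ is exactly the condition making the $2\ell$-fold sum split into $\ell$ such two-variable sums; the only care needed is the bookkeeping of the shifts $h_i=g_i(b)+A_{ii}$ and of the constant terms $h_{i_s}h_{j_s}$ produced when the quadratic factors are expanded.
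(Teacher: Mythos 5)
Your proposal is correct and follows essentially the same route as the paper: the paper's argument is precisely the Case~I derivation you start from, with the theorem following by substituting the affine restriction of $F$, splitting the $2\ell$-fold sum into the $\ell$ disjoint pairs $(x_{i_s},x_{j_s})$, and evaluating each factor via $\sum_{x,y\in\mathbb{F}_2}(-1)^{\alpha x+\beta y+xy}=2(-1)^{\alpha\beta}$, exactly as you do. The only quibble is notational: the displayed ``expansion'' of $(h_{i_s}+x_{i_s})(h_{j_s}+x_{j_s})$ silently absorbs the linear terms $F_{i_s}x_{i_s}+F_{j_s}x_{j_s}$ of the exponent, which is clear from context but should be stated as an identity for the whole exponent rather than for the product alone.
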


\begin{example}
Let $n=2m=6$, $\tau=4$, $\xi$ be a primitive element of $\in \mathbb{F}_{2^6}^{*}$ and $g(x)=\operatorname{Tr}_1^3(x^{9})$. From \cite{M} we know that the dual of $g(x)$ satisfies \eqref{g-Aij} with $A_{ii}=\operatorname{Tr}_1^3 (u_i^{9} )$, $A_{ij}=\operatorname{Tr} (u_iu_j^{8} )$ for $i<j$ and $g_i(x)=\operatorname{Tr} (u_i^{8}x )$. Taking $u_1=1$, $u_2=\xi$, $u_3=\xi^4$, $u_4=\xi^2$, and $F(x_1,x_2,x_3,x_4)=x_1x_2+x_1x_3x_4$, then $A_{23}=1$ and $A_{12}=A_{13}=A_{14}=A_{24}=A_{34}=0$. It indicates $\#\Gamma=\#\{2,3\}=2$ and $F(x_1,x_2,x_3,x_4)$ satisfies
$$F(x_1,x_2,x_3,x_4)|_{x_1=\operatorname{Tr} (b)+1,x_4=\operatorname{Tr} (\xi^{16} b)}
=(\operatorname{Tr} (b)+1)x_2+(\operatorname{Tr} (b)+1)\operatorname{Tr} (\xi^{16} b)x_3.$$
Magma shows that
$$f(x)=\operatorname{Tr}_1^3(x^{9})+\operatorname{Tr}(x)\operatorname{Tr}(\xi x)
+\operatorname{Tr}(x)\operatorname{Tr}(\xi^4x)\operatorname{Tr}(\xi^2x)$$
is bent over $\mathbb{F}_{2^6}$ and its Walsh transform at $b\in\mathbb{F}_{2^6}$ is
$$\widehat{f}(b)=2^{3}(-1)^{\operatorname{Tr}_1^3 (b^{9} )+1
+(\operatorname{Tr} (\xi^8 b)+(\operatorname{Tr} (b)+1)\operatorname{Tr} (\xi^{16} b))
(\operatorname{Tr} (\xi^{32} b)+\operatorname{Tr} (b)+1)+\operatorname{Tr} (\xi^8 b)\operatorname{Tr} (\xi^{32} b)},$$
which is consistent with our result in Theorem \ref{p=2.F1}.
\end{example}

Case II: $\#\Gamma<2\ell$, where $2\leq2\ell\leq \tau$.

If this case happens, then $i_{s_1}=i_{s_2}$ or $i_{s_1}=j_{s_2}$ for at least two integers $1\leq s_1,\,s_2\leq\ell$. Here we only discuss the case $A_{i_1j_s}=1$ for $s=1,\cdots,\ell_1$ and $A_{ij}=0$ for any other $i\ne j$ since other cases can be considered in the same manner, where $1\leq i_1< j_1< \cdots< j_{\ell_1}\leq \tau$. For this case, \eqref{f^b.p=2} becomes
$$\widehat{f}(b)= 2^{n/2-\ell_1-1}(-1)^{\widetilde{g}(b)}T,$$
where
$$\begin{aligned}
T&=\sum_{i,j\in\Gamma} \sum_{x_{i}=0}^{1}\sum_{t_{j}=0}^{1}
(-1)^{F(x_1,\cdots,x_{\tau})|_{x_i=h_i, i\not\in \Gamma}
+ (\sum\limits_{s=1}^{\ell_1}t_{j_s}+h_{i_1}+x_{i_1} )t_{i_1}+\sum\limits_{s=1}^{\ell_1} (h_{j_s}+x_{j_s})t_{j_s}}\\
&=2\sum_{i\in\Gamma} \sum_{x_{i}=0}^{1}\sum_{s=2}^{\ell_1}\sum_{t_{j_s}=0}^{1}
(-1)^{F(x_1,\cdots,x_{\tau})|_{x_i=h_i, i\not\in \Gamma}
+(h_{j_1}+x_{j_1}) (\sum\limits_{s=2}^{\ell_1}t_{j_s}+h_{i_1}+x_{i_1} ) +\sum\limits_{s=2}^{\ell_1} (h_{j_s}+x_{j_s})t_{j_s}}\\
&=2\sum_{i\in\Gamma} \sum_{x_{i}=0}^{1}\sum_{s=2}^{\ell_1}\sum_{t_{j_s}=0}^{1}
(-1)^{F(x_1,\cdots,x_{\tau})|_{x_i=h_i, i\not\in \Gamma}
+\sum\limits_{s=2}^{\ell_1}(h_{j_s}+x_{j_s}+h_{j_1}+x_{j_1})t_{j_s}+ (h_{i_1}+x_{i_1})(h_{j_1}+x_{j_1})}\\
&=2^{\ell_1}\sum_{x_{i_1}=0}^{1}\sum_{x_{j_1}=0}^{1}
(-1)^{F(x_1,\cdots,x_{\tau})|_{x_i=h_i, i\not\in \Gamma, x_{j_s}=x_{j_1}+h_{j_1}+h_{j_s}, 2\leq s\leq \ell_1}
+ (h_{i_1}+x_{i_1})(h_{j_1}+x_{j_1})}.
\end{aligned}$$

Then, another class of bent functions can be obtained as follows.

\begin{thm}\label{p=2.F2}
Let $n=2m$, $u_{1}, u_{2}, \cdots, u_{\tau}$ be pairwise distinct elements in $\mathbb{F}_{2^{n}}^{*}$, where $2\leq\tau\leq m$. Let $g(x)$ be a bent function over $\mathbb{F}_{2^n}$ whose dual satisfies \eqref{g-Aij} with $A_{i_1 j_s}=1$ for $1\leq i_1< j_1< \cdots< j_{\ell_1}\leq \tau$, $s=1,\cdots,\ell_1$ and $A_{ij}=0$ for other $1\leq i< j\leq \tau$. If $F(x_1,\cdots,x_{\tau})$ satisfies
$$F(x_1,\cdots,x_{\tau})|_{x_i=h_i, i\not\in \Gamma, x_{j_s}=x_{j_1}+h_{j_1}+h_{j_s}, 2\leq s\leq \ell_1}=
F_{i_1} x_{i_1}+F_{j_1} x_{j_1}+F_0$$
for some $F_{i_1}, F_{j_1}, F_0\in \mathbb{F}_2$, where $h_i=g_i(b)+A_{ii}$, $\Gamma$ is defined in \eqref{T}. Then $f(x)$ defined by \eqref{f=g+F} is bent and its Walsh transform at $b\in\mathbb{F}_{2^n}$ is
$$\widehat{f}(b)=2^{n/2}(-1)^{\widetilde{g}(b)+(h_{i_1}+F_{j_1})(h_{j_1}+F_{i_1})
+h_{i_1}h_{j_1}+F_0}.$$
\end{thm}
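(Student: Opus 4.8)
The plan is to begin from the expression for the Walsh transform already obtained in the discussion of Case~II immediately preceding the statement. Because $g$ is bent with dual satisfying \eqref{g-Aij} for the prescribed pattern $A_{i_1j_s}=1$ ($s=1,\dots,\ell_1$) and all other off-diagonal coefficients zero, equation \eqref{f^b.p=2} reduces to $\widehat{f}(b)=2^{n/2-\ell_1-1}(-1)^{\widetilde{g}(b)}\,T$, where
\[
T=2^{\ell_1}\sum_{x_{i_1}=0}^{1}\sum_{x_{j_1}=0}^{1}(-1)^{\,F(x_1,\cdots,x_{\tau})|_{x_i=h_i,\,i\notin\Gamma,\,x_{j_s}=x_{j_1}+h_{j_1}+h_{j_s},\,2\le s\le\ell_1}\,+\,(h_{i_1}+x_{i_1})(h_{j_1}+x_{j_1})}.
\]
The first step is to insert the hypothesis on $F$: the indicated restriction equals $F_{i_1}x_{i_1}+F_{j_1}x_{j_1}+F_0$, so the constant $(-1)^{F_0}$ factors out of the double sum.

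What remains is to evaluate a quadratic character sum over $\mathbb{F}_2^2$. Expanding $(h_{i_1}+x_{i_1})(h_{j_1}+x_{j_1})=h_{i_1}h_{j_1}+h_{j_1}x_{i_1}+h_{i_1}x_{j_1}+x_{i_1}x_{j_1}$ over $\mathbb{F}_2$ and collecting the linear parts, the exponent becomes $(F_{i_1}+h_{j_1})x_{i_1}+(F_{j_1}+h_{i_1})x_{j_1}+x_{i_1}x_{j_1}+h_{i_1}h_{j_1}$, whence $T=2^{\ell_1}(-1)^{F_0+h_{i_1}h_{j_1}}\sum_{x,y\in\mathbb{F}_2}(-1)^{\alpha x+\beta y+xy}$ with $\alpha=F_{i_1}+h_{j_1}$ and $\beta=F_{j_1}+h_{i_1}$. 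I would then invoke the elementary identity $\sum_{x,y\in\mathbb{F}_2}(-1)^{\alpha x+\beta y+xy}=2(-1)^{\alpha\beta}$, checked by enumerating the four pairs $(x,y)$, to conclude that $T=2^{\ell_1+1}(-1)^{F_0+h_{i_1}h_{j_1}+(h_{i_1}+F_{j_1})(h_{j_1}+F_{i_1})}$.

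Plugging this into $\widehat{f}(b)=2^{n/2-\ell_1-1}(-1)^{\widetilde{g}(b)}\,T$ makes the powers of $2$ cancel down to $2^{n/2}$ and produces exactly $\widehat{f}(b)=2^{n/2}(-1)^{\widetilde{g}(b)+(h_{i_1}+F_{j_1})(h_{j_1}+F_{i_1})+h_{i_1}h_{j_1}+F_0}$. Since the modulus of the right-hand side equals $2^{n/2}$ for every $b\in\mathbb{F}_{2^n}$, the function $f$ is bent by definition, and the stated Walsh value is the one just computed.

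No serious obstacle is expected: the only genuinely delicate reduction---rewriting \eqref{f^b.p=2} as the two-variable sum $T$---has already been carried out before the statement, and the rest is the affine substitution together with a $2\times2$ sum. The one point needing care is the bookkeeping of the substitution constraints $x_{j_s}=x_{j_1}+h_{j_1}+h_{j_s}$ ($2\le s\le\ell_1$) and of $h_i=g_i(b)+A_{ii}$: one should observe that the indices $i_1,j_1,\dots,j_{\ell_1}$ are pairwise distinct, because $i_1<j_1<\cdots<j_{\ell_1}$, so that the restriction of $F$ in the hypothesis is well defined, and that $\Gamma=\{i_1,j_1,\dots,j_{\ell_1}\}$ has $\ell_1+1$ elements, matching the Case~II normalization $\#\Gamma<2\ell_1$.
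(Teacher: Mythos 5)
Your proposal is correct and follows essentially the same route as the paper: it takes the Case~II reduction $\widehat{f}(b)=2^{n/2-\ell_1-1}(-1)^{\widetilde{g}(b)}T$ with $T$ expressed as the two-variable sum in $x_{i_1},x_{j_1}$, inserts the hypothesis on the restricted $F$, and evaluates the resulting $2\times 2$ character sum via $\sum_{x,y\in\mathbb{F}_2}(-1)^{\alpha x+\beta y+xy}=2(-1)^{\alpha\beta}$, which is exactly the computation the paper leaves implicit after its Case~II discussion. The only (immaterial) quibble is your closing remark that $\#\Gamma=\ell_1+1<2\ell_1$: this needs $\ell_1\geq 2$, while for $\ell_1=1$ the statement coincides with the $\ell=1$ instance of Case~I and the same calculation still gives the claimed formula.
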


\begin{example}
Let $n=2m=8$, $\tau=4$, $\xi$ be a primitive element of $\in \mathbb{F}_{2^8}^{*}$ and $g(x)=\operatorname{Tr}_1^4(\xi^{17}x^{17})$. From \cite{M} we know that the dual of $g(x)$ satisfies \eqref{g-Aij} with $A_{ii}=\operatorname{Tr}_1^3 (\xi^{238}u_i^{17} )$, $A_{ij}=\operatorname{Tr} (\xi^{238}u_iu_j^{16} )$ for $i<j$ and $g_i(x)=\operatorname{Tr} (\xi^{238}u_i^{16}x )$. Taking $u_1=\xi$, $u_2=\xi^6$, $u_3=\xi^{11}$, $u_4=\xi^{20}$, and $F(x_1,x_2,x_3,x_4)=x_1x_4+x_2x_3x_4$, then $A_{12}=A_{13}=1$ and $A_{14}=A_{23}=A_{24}=A_{34}=0$.
It indicates $\Gamma=\{1,2,3\}$ and $F(x_1,x_2,x_3,x_4)$ satisfies
$$F(x_1,x_2,x_3,x_4)|_{x_4=\operatorname{Tr} (\xi^{48}b),x_3=x_2+\operatorname{Tr} ((\xi^{79}+\xi^{159})b)}=\operatorname{Tr} (\xi^{48}b)x_1+
\operatorname{Tr}(\xi^{48}b)(\operatorname{Tr}((\xi^{79}+\xi^{159})b)+1)x_2.$$
Magma shows that
$$f(x)=\operatorname{Tr}_1^4(\xi^{17}x^{17})+\operatorname{Tr}(\xi x)\operatorname{Tr}(\xi^{20}x)
+\operatorname{Tr}(\xi^{6}x)\operatorname{Tr}(\xi^{11}x)\operatorname{Tr}(\xi^{20}x)$$
is bent over $\mathbb{F}_{2^8}$ and its Walsh transform at $b\in\mathbb{F}_{2^8}$ is
$$\widehat{f}(b)=2^{4}(-1)^{\operatorname{Tr}_1^4(\xi^{238}b^{17})+1
+(\operatorname{Tr} (\xi^{254}b)+\operatorname{Tr}(\xi^{48}b)(\operatorname{Tr}((\xi^{79}+\xi^{159})b)+1))
(\operatorname{Tr} (\xi^{79} b)+\operatorname{Tr} (\xi^{48}b))
+\operatorname{Tr} (\xi^{254}b)\operatorname{Tr} (\xi^{79} b)},$$
which is consistent with our result in Theorem \ref{p=2.F2}.
\end{example}

\begin{remark}
For an odd prime $p$, when $A_{ii}=0$ for all $1\leq i\leq \tau$, similar results to Theorem \ref{p=2.F1} and Theorem \ref{p=2.F2} can be obtained through the above discussion.
\end{remark}
\subsection{Bent functions of the form \eqref{f=g+F} for odd $p$}\label{Walsh.g bent.odd p}
In this subsection, let $p$ be an odd prime and $f(x)$ be of the form \eqref{f=g+F}, where $g(x)$ is a weakly regular bent function whose dual satisfies \eqref{g-Aij}.

Since $g(x)$ is a weakly regular bent function, suppose that $\widehat{g}(x)=\mu^{-1}p^{n/2}\omega^{\tilde{g}(x)}$, then by \eqref{g-Aij}, \eqref{F1} and Theorem \ref{thm.wf.f=g+F}, one gets
\begin{equation}\label{f^b}
\widehat{f}(b)=\mu^{-1} p^{n/2-\tau}\sum_{i,j=1}^{\tau} \sum_{x_{i}=0}^{p-1}\sum_{t_{j}=0}^{p-1}
\omega^{F(x_1,\cdots,x_{\tau})+\widetilde{g}(b)+\sum_{1\leq i\leq j\leq \tau}A_{ij}t_it_j+\sum_{i=1}^{\tau} (g_i(b)-x_i )t_i},
\end{equation}
where $g_{i}(x)$ is a function from $\mathbb{F}_{p^{n}}$ to $\mathbb{F}_{p}$ for each $1\leq i \leq \tau$, $A_{ij}\in\mathbb{F}_{p}$ for $1\leq i\leq j\leq \tau$.

In particular, if one takes $\tau=2$ and $F(x_1,\,x_2)=x_1x_2$, then Theorem \ref{thm.wf.f=g+F} yields
$$\widehat{f}(b)=\mu^{-1} p^{n/2-\tau+1} \sum_{x_{1}=0}^{p-1}\sum_{t_{1}=0}^{p-1}
\omega^{-x_1t_1+\widetilde{g}(b)+A_{11}t_1^2+A_{12}x_1t_1+A_{22}x_1^2+g_1(b)t_1+g_2(b)x_1}.$$

As a consequence, the following result can be verified by Lemma \ref{expH.k=2}.

\begin{prop}\label{cor.bent g(x).k=2}
Let $p$ be an odd prime and $g(x)$ be a weakly regular bent function over $\mathbb{F}_{p^n}$ whose dual satisfies
$$\widetilde{g}(x-t_1u_1-t_2u_2)=\widetilde{g}(x)+A_{11}t_1^2+A_{22}t_2^2+A_{12}t_1t_2+g_1(x)t_1+g_2(x)t_2$$
for all $x\in\mathbb{F}_{p^n}$ and $t_1,\,t_2\in\mathbb{F}_{p}$, where $g_1(x)$ and $g_2(x)$ are functions from $\mathbb{F}_{p^{n}}$ to $\mathbb{F}_{p}$. Let $u_1, u_2\in\mathbb{F}_{p^{n}}^{*}$, then $f(x)=g(x)+\operatorname{Tr}(u_1x)\operatorname{Tr}(u_2x)$ is a weakly regular bent function if and only if $(A_{12}-1)^2-4A_{11}A_{22}\neq 0$. Moreover, for $b \in\mathbb{F}_{p^n}$, the Walsh transform of $f(x)$ is
$$\widehat{f}(b)=\mu^{-1}\eta ((A_{12}-1)^2-4A_{11}A_{22} )p^{n/2}
\omega^{\widetilde{g}(b)+\frac{A_{22}g_1(b)^2+A_{11}g_2(b)^2-(A_{12}-1)g_1(b)g_2(b)}{(A_{12}-1)^2-4A_{11}A_{22}}
}.$$
\end{prop}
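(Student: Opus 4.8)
The plan is to start from the closed form for $\widehat f(b)$ displayed immediately before the statement, namely the case $\tau=2$, $F(x_1,x_2)=x_1x_2$, whose prefactor is $\mu^{-1}p^{n/2-1}$. Pulling $\omega^{\widetilde g(b)}$ out of the double sum, I would write $\widehat f(b)=\mu^{-1}p^{n/2-1}\omega^{\widetilde g(b)}H(b)$ with
$$H(b)=\sum_{t_1,x_1\in\mathbb F_p}\omega^{A_{11}t_1^{2}+A_{22}x_1^{2}+(A_{12}-1)t_1x_1+g_1(b)t_1+g_2(b)x_1}.$$
The key observation is that $H(b)$ is precisely the exponential sum $H$ of Lemma \ref{expH.k=2} under the substitution $(a_1,a_2,a_3,a_4,a_5)=(A_{11},A_{22},A_{12}-1,g_1(b),g_2(b))$, reading the summation variable $x$ there as $t_1$ and $y$ as $x_1$; its discriminant is $a_3^{2}-4a_1a_2=(A_{12}-1)^{2}-4A_{11}A_{22}$.

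For the ``if'' direction, assume $(A_{12}-1)^{2}-4A_{11}A_{22}\neq0$. Then the case $a_3^{2}-4a_1a_2\neq0$ of Lemma \ref{expH.k=2} gives $H(b)=\eta((A_{12}-1)^{2}-4A_{11}A_{22})\,p\,\omega^{\Delta(b)}$, where
$$\Delta(b)=\frac{A_{22}g_1(b)^{2}+A_{11}g_2(b)^{2}-(A_{12}-1)g_1(b)g_2(b)}{(A_{12}-1)^{2}-4A_{11}A_{22}},$$
the quotient being computed in $\mathbb F_p$ since the denominator is a nonzero element of $\mathbb F_p$. Substituting back reproduces the asserted formula $\widehat f(b)=\mu^{-1}\eta((A_{12}-1)^{2}-4A_{11}A_{22})p^{n/2}\omega^{\widetilde g(b)+\Delta(b)}$. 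Since $|\mu|=1$ and $\eta(\cdot)=\pm1$, this has modulus $p^{n/2}$ for every $b$, so $f$ is bent; moreover the scalar $\mu^{-1}\eta((A_{12}-1)^{2}-4A_{11}A_{22})$ has unit modulus and does not depend on $b$, while $b\mapsto\widetilde g(b)+\Delta(b)$ is a function from $\mathbb F_{p^n}$ to $\mathbb F_p$, so $f$ is weakly regular bent with dual $\widetilde g+\Delta$.

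For the ``only if'' direction, suppose $(A_{12}-1)^{2}-4A_{11}A_{22}=0$. Then $H(b)$ is governed by the case $a_3^{2}-4a_1a_2=0$ of Lemma \ref{expH.k=2}: each of its three nonzero branches gives $|H(b)|=p^{2}$ or $|H(b)|=p\sqrt p$ (using $|\sqrt{p^{*}}|=\sqrt p$), and $H(b)=0$ otherwise, so $|H(b)|\in\{0,\,p\sqrt p,\,p^{2}\}$ for every $b$. Hence $|\widehat f(b)|=p^{n/2-1}|H(b)|\in\{0,\,p^{(n+1)/2},\,p^{n/2+1}\}$, and since $p$ is an odd prime none of these equals $p^{n/2}$; therefore $f$ is not bent, and in particular not weakly regular bent. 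Combining the two directions yields both the stated equivalence and the Walsh-transform formula.

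I expect the only delicate point to be the case analysis in the ``only if'' direction: one must check that, in the degenerate-discriminant situation, every value of $|H(b)|$ allowed by Lemma \ref{expH.k=2} differs from $p$, which is exactly where the hypothesis $p$ odd (hence $p\neq1$ and $\sqrt p\neq1$) is used. Everything else reduces to a direct substitution into results already established in the excerpt.
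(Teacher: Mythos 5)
Your proposal is correct and follows essentially the same route as the paper: the paper derives the same specialization of Theorem \ref{thm.wf.f=g+F} for $\tau=2$, $F(x_1,x_2)=x_1x_2$ and then states that the proposition "can be verified by Lemma \ref{expH.k=2}", which is exactly the substitution $(a_1,a_2,a_3,a_4,a_5)=(A_{11},A_{22},A_{12}-1,g_1(b),g_2(b))$ you carry out, including the nondegenerate branch for the formula and the degenerate branch (with $|H(b)|\in\{0,p\sqrt p,p^2\}$) for non-bentness. Your write-up just makes explicit the case analysis the paper leaves implicit.
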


\begin{example}
Let $p=5$, $n=2$, $u_{1}, u_{2}\in\mathbb{F}_{5^2}^{*}$ and $g(x)=\operatorname{Tr}(ax^{2})$ with $a\in \mathbb{F}_{5^2}^{*}$. It is known in \cite{HK} that $\widehat{g}(b)=-\eta(a)5\omega^{\operatorname{Tr}(\frac{b^{2}}{a})}$ for $b\in\mathbb{F}_{5^2}$. Then   $\widetilde{g}(x-t_1u_1-t_2u_2)$ is equal to
$$\operatorname{Tr}(\frac{x^{2}}{a})
+\operatorname{Tr}(\frac{u_1^{2}}{a})t_1^2+\operatorname{Tr}(\frac{u_2^{2}}{a})t_2^2+
2\operatorname{Tr}(\frac{u_1u_2}{a})t_1t_2-2\operatorname{Tr}(\frac{u_1x}{a})t_1
-2\operatorname{Tr}(\frac{u_2x}{a})t_2.$$
Magma experiments show that $f(x)=\operatorname{Tr}(ax^{2})+\operatorname{Tr}(u_1x)\operatorname{Tr}(u_2x)$ is bent if and only if $\Delta=(2\operatorname{Tr}(\frac{u_1u_2}{a})-1)^2+ \operatorname{Tr}(\frac{u_1^2}{a})\operatorname{Tr}(\frac{u_2^2}{a})\neq0$.   Moreover, for $b \in\mathbb{F}_{5^2}$, the Walsh transform of $f(x)$ is
$$\widehat{f}(b)=-\eta(a\Delta)5
\omega^{\operatorname{Tr}(\frac{b^{2}}{a})
-\frac{\operatorname{Tr}(\frac{u_2^2}{a})\operatorname{Tr}(\frac{u_1b}{a})^2
+\operatorname{Tr}(\frac{u_1^2}{a})\operatorname{Tr}(\frac{u_2b}{a})^2
-(2\operatorname{Tr}(\frac{u_1u_2}{a})-1)
\operatorname{Tr}(\frac{u_1b}{a})\operatorname{Tr}(\frac{u_2b}{a})}
{(2\operatorname{Tr}(\frac{u_1u_2}{a})-1)^2+ \operatorname{Tr}(\frac{u_1^2}{a})\operatorname{Tr}(\frac{u_2^2}{a})}
}.$$
\end{example}

For a general $F(x_1,\cdots,x_{\tau})$, we assume that the nonzero elements in $\{A_{ij}:1\leq i\leq j\leq \tau\}$ are $A_{i_1j_1}, \cdots, A_{i_{\ell}j_{\ell}}$, where $\ell$ is an integer, $1\leq i_1\leq \cdots\leq i_{\ell}\leq \tau$ and $1< j_1\leq \cdots\leq j_{\ell}\leq \tau$. Let $\Gamma$ be given as in \eqref{T}, it is extremely difficult to calculate the value of $\widehat{f}(b)$ when $\deg(F(x_1,\cdots,x_{\tau})|_{x_i=g_i(b), i\not\in\Gamma})>2$. Thus, we focus on the case $\deg(F(x_1,\cdots,x_{\tau})|_{x_i=g_i(b),i\not\in\Gamma})\leq2$,
and then we can discuss \eqref{f^b} as follows:

Case I: $\#\Gamma\leq2$.

Assume that $\tau_1\ne\tau_2$ and $\Gamma\subset\{\tau_1,\tau_2\}$, i.e., $A_{\tau_1\tau_1},A_{\tau_1\tau_2},A_{\tau_2\tau_2}\in \mathbb{F}_{p}$ and $A_{ij}=0$ otherwise. Then  \eqref{f^b} is reduced to
$$\widehat{f}(b)=\mu^{-1} p^{n/2-2}\omega^{\widetilde{g}(b)}M$$
by using the fact that
$\sum_{t_{i}=0}^{p-1} \omega^{(g_{i}(b)-x_{i}) t_{i}}=p$ if $x_{i}=g_{i}(b)$ and $0$ otherwise for $i\not\in\{\tau_1,\tau_2\}$, where
$$M=
\sum_{x_{\tau_1}=0}^{p-1}\sum_{x_{\tau_2}=0}^{p-1}
\omega^{F'}
\sum_{t_{\tau_1}=0}^{p-1}\sum_{t_{\tau_2}=0}^{p-1}\omega^{
A_{\tau_1\tau_1}t_{\tau_1}^2+A_{\tau_2\tau_2}t_{\tau_2}^2+A_{\tau_1\tau_2}t_{\tau_1}t_{\tau_2}
+ (g_{\tau_1}(b)-x_{\tau_1} )t_{\tau_1}+ (g_{\tau_2}(b)-x_{\tau_2} )t_{\tau_2}}
$$
with $F'=F(x_1,\cdots,x_{\tau})|_{x_i=g_{i}(b), i\not\in\{\tau_1,\tau_2\}}$. Then, according to Lemma \ref{expH.k=2}, we can calculate \eqref{f^b} as the following four cases:

(1): $A_{\tau_1\tau_1}=A_{\tau_1\tau_2}=A_{\tau_2\tau_2}=0$. For this case, by Lemma \ref{expH.k=2}, it can be readily verified that
$$\widehat{f}(b)=\mu^{-1} p^{n/2}\omega^{\widetilde{g}(b)+F (g_1(b),g_2(b),\cdots,g_{\tau}(b) )}.$$

(2): $A_{\tau_1\tau_1}=A_{\tau_1\tau_2}=0$ and $A_{\tau_2\tau_2}\neq 0$. If this case occurs, then again by Lemma \ref{expH.k=2} one can claim that $g_{\tau_1}(b)-x_{\tau_1}=0$ and
$$\widehat{f}(b)=\mu^{-1} p^{n/2-1}\sqrt{p^*}\eta(A_{\tau_2\tau_2})\omega^{\widetilde{g}(b)}\sum_{x_{\tau_2}=0}^{p-1}
\omega^{F(x_1,\cdots,x_{\tau})|_{x_i=g_{i}(b), i\ne\tau_2}-\frac{ (g_{\tau_2}(b)-x_{\tau_2} )^2}{4A_{\tau_2\tau_2}}}.$$
Under this case, if $\deg(F(x_1,\cdots,x_{\tau})|_{x_i=g_i(b), i\ne\tau_2})\leq2$, then $F(x_1,\cdots,x_{\tau})$ satisfies
\begin{equation}\label{F{1}}
F(x_1,\cdots,x_{\tau})|_{x_i=g_{i}(b), i\ne\tau_2}=a_{2}x_{\tau_2}^2+F_{2}(b)x_{\tau_2}+F_0(b)
\end{equation}
for some $F_{2}(x),F_0(x)\in\mathbb{F}_p[x]$ and $a_{2}\in\mathbb{F}_p$. Consequently, one obtains
$$\widehat{f}(b)=\mu^{-1} p^{n/2-1}\sqrt{p^*}\eta(A_{\tau_2\tau_2})\omega^{\widetilde{g}(b)}\sum_{x_{\tau_2}=0}^{p-1}
\omega^{ (a_{2}-\frac{1}{4A_{\tau_2\tau_2}} )x_{\tau_2}^2
+ (F_2(b)+\frac{g_{\tau_2}(b)}{2A_{\tau_2\tau_2}} )x_{\tau_2}+F_0(b)-\frac{g_{\tau_2}(b)^2}{4A_{\tau_2\tau_2}}
}.$$
Then we can deduce that $f(x)$ is bent only if $a_2-\frac{1}{4A_{\tau_2\tau_2}}\neq 0$ and
$$\widehat{f}(b)=\mu^{-1}p^{n/2}\eta(1-4a_2A_{\tau_2\tau_2})\omega^{\widetilde{g}(b)
+\frac{A_{\tau_2\tau_2}}{1-4a_{2}A_{\tau_2\tau_2}} (F_2(b)+\frac{g_{\tau_2}(b)}{2A_{\tau_2\tau_2}} )^2
+F_0(b)-\frac{g_{\tau_2}(b)^2}{4A_{\tau_2\tau_2}}}
$$
from Lemma \ref{char sum(ax^2+bx+c)}.

(3): $A_{\tau_1\tau_1}\neq 0$ and $A_{\tau_1 \tau_2}^2=4A_{\tau_1 \tau_1}A_{\tau_2 \tau_2}$. In this case, by Lemma \ref{expH.k=2}, one has  $$g_{\tau_2}(b)-x_{\tau_2}=\frac{A_{\tau_1 \tau_2} (g_{\tau_1}(b)-x_{\tau_1} )}{2A_{\tau_1 \tau_1}},$$
i.e., $x_{\tau_2}=\frac{A_{\tau_1 \tau_2}}{2A_{\tau_1 \tau_1}}x_{\tau_1}+\varphi(b)$ with $\varphi(b)=g_{\tau_2}(b)-\frac{A_{\tau_1 \tau_2}}{2A_{\tau_1 \tau_1}}g_{\tau_1}(b)$. Then $M$ becomes
$$M=p\sqrt{p^*}\eta(A_{\tau_1\tau_1})\sum_{x_{\tau_1}=0}^{p-1}
\omega^{F(x_1,\cdots,x_{\tau})|_{x_i=g_{i}(b), i\not\in\{\tau_1,\tau_2\}, x_{\tau_2}=A_{\tau_1 \tau_2}x_{\tau_1}/(2A_{\tau_1 \tau_1})+\varphi(b)}-\frac{ (g_{\tau_1}(b)-x_{\tau_1} )^2}{4A_{\tau_1 \tau_1}}}.
$$
 Note that $\deg(F(x_1,\cdots,x_{\tau})|_{x_i=g_i(b), i\not\in\{\tau_1,\tau_2\}})\leq2$ yields $F(x_1,\cdots,x_{\tau})$ satisfies
\begin{equation}\label{F{12}}
F(x_1,\cdots,x_{\tau})|_{x_i=g_{i}(b), i\not\in\{\tau_1,\tau_2\}}
=a_{11}x_{\tau_1}^2+a_{22}x_{\tau_2}^2+a_{12}x_{\tau_1}x_{\tau_2}+F_{1}(b)x_{\tau_1}+F_{2}(b)x_{\tau_2}+F_0(b)
\end{equation}
for some $F_{1}(x),F_{2}(x),F_0(x)\in\mathbb{F}_p[x]$ and $a_{11},a_{22},a_{12}\in\mathbb{F}_p$, then $M$ turns into
$$M=p\sqrt{p^*}\eta(A_{\tau_1\tau_1})\sum\nolimits_{x_{\tau_1}=0}^{p-1}\omega^{\alpha_1x_{\tau_1}^2
+\alpha_2x_{\tau_1}+\alpha_3},$$
where $\alpha_{i}$ are given by
\begin{eqnarray}\label{ai}
  \alpha_1&=&a_{11}+ (\frac{A_{\tau_1 \tau_2}}{2A_{\tau_1 \tau_1}} )^2a_{22}+
  \frac{2A_{\tau_1\tau_2}a_{12}-1}{4A_{\tau_1\tau_1}},\nonumber \\
  \alpha_2 &=&  (\frac{A_{\tau_1 \tau_2}}{A_{\tau_1 \tau_1}}a_{22}+a_{12} )\varphi(b)+F_1(b)+\frac{A_{\tau_1\tau_2}F_{2}(b)+g_{\tau_1}(b)}{2A_{\tau_1\tau_1}}, \\
  \alpha_3&=& a_{22}\varphi(b)^2+F_2(b)\varphi(b)+F_0(b)-\frac{g_{\tau_1}(b)^2}{4A_{\tau_1\tau_1}}. \nonumber
\end{eqnarray}
Then it can be verified that $f(x)$ is bent only if $\alpha_1\neq0$ and in this case we have
$$\widehat{f}(b)=\mu^{-1} p^{n/2}\eta(-A_{\tau_1 \tau_1}\alpha_1)\omega^{\widetilde{g}(b)+\alpha_3-\alpha_2^2/(4\alpha_1)}.$$
(4): $A_{\tau_1\tau_2}^{2}-4 A_{\tau_1\tau_1}A_{\tau_2\tau_2}\neq 0$. When $F(x_1,\cdots,x_{\tau})$ satisfies \eqref{F{12}}, similar to the above case, a straightforward computation gives
$$M=p\eta (\Delta)\sum_{x_{\tau_1}=0}^{p-1} \sum_{x_{\tau_2}=0}^{p-1} \omega^{B_{1}x_{\tau_1}^2+B_{2}x_{\tau_2}^2+B_{3}x_{\tau_1}x_{\tau_2}+\beta_{1}x_{\tau_1}
+\beta_{2}x_{\tau_2}+\beta_{3}},$$
where $\Delta=A_{\tau_1 \tau_2}^2-4A_{\tau_1 \tau_1}A_{\tau_2 \tau_2}$, $B_{i}$ are given by
\begin{eqnarray}\label{Bi}
 B_1 = a_{11}+\frac{A_{\tau_2\tau_2}}{\Delta},
 B_2 = a_{22}+\frac{A_{\tau_1\tau_1}}{\Delta},
 B_3 = a_{12}-\frac{A_{\tau_1\tau_2}}{\Delta}
\end{eqnarray}
and $\beta_{i}$ are given by
\begin{eqnarray}\label{betai}
  \beta_1&=&F_{1}(b)+\frac{A_{\tau_1\tau_2}g_{\tau_2}(b)-2A_{\tau_2\tau_2}g_{\tau_1}(b)}{\Delta},\nonumber \\
  \beta_2 &=& F_{2}(b)+\frac{A_{\tau_1\tau_2}g_{\tau_1}(b)-2A_{\tau_1\tau_1}g_{\tau_2}(b)}{\Delta}, \\
  \beta_3&=& F_0(b)+\frac{A_{\tau_2\tau_2}g_{\tau_1}(b)^2+A_{\tau_1\tau_1}g_{\tau_2}(b)^2
-A_{\tau_1\tau_2}g_{\tau_1}(b)g_{\tau_2}(b)}{\Delta}. \nonumber
\end{eqnarray}
Then by Lemma \ref{expH.k=2} one can conclude that $f(x)$ is bent only if $B_{3}^2-4B_{1}B_{2}\neq 0$ and consequently
$$\widehat{f}(b)=\mu^{-1}p^{n/2}\eta (\Delta )\eta (B_{3}^2-4B_{1}B_{2} )
\omega^{\widetilde{g}(b)+\frac{B_{1}\beta_{2}^2+B_{2}\beta_{1}^2-B_{3}\beta_{1}\beta_{2}}
{B_{3}^2-4B_{1}B_{2}}+\beta_{3}}.$$

Based on the above discussions, we can arrive at the following result.

\begin{thm}\label{thm.bent g(x)}
Let $p$ be an odd prime, $u_{1}, u_{2}, \cdots, u_{\tau}\in\mathbb{F}_{p^{n}}^{*}$ with $\tau\geq2$ and $f(x)$ be defined by \eqref{f=g+F}, where $g(x)$ is a weakly regular bent function over $\mathbb{F}_{p^n}$ whose dual satisfies \eqref{g-Aij} and $F(x_1,\cdots,x_n)$ is a reduced polynomial in $\mathbb{F}_p[x_1,\cdots,x_n]$.
\\
(1) If $A_{ij}=0$ for all $1\leq i\leq j\leq \tau$, then $f(x)$ is weakly regular bent and its Walsh transform is
$$\widehat{f}(b)=\mu^{-1}p^{n/2}\omega^{\widetilde{g}(b)+F (g_1(b),g_2(b),\cdots,g_{\tau}(b) )}.$$
(2) If there exists some $1\leq \tau_2\leq \tau$ such that $A_{\tau_2\tau_2}\neq 0$, $A_{ij}=0$ otherwise, $F(x_1,\cdots,x_{\tau})$ satisfies \eqref{F{1}} and $a_2-\frac{1}{4A_{\tau_2\tau_2}}\neq 0$, then $f(x)$ is weakly regular bent and its Walsh transform is
$$\widehat{f}(b)=\mu^{-1}p^{n/2}\eta(1-4a_2A_{\tau_2\tau_2})\omega^{\widetilde{g}(b)
+\frac{A_{\tau_2\tau_2}}{1-4a_{2}A_{\tau_2\tau_2}} (F_2(b)+\frac{g_{\tau_2}(b)}{2A_{\tau_2\tau_2}} )^2
+F_0(b)-\frac{g_{\tau_2}(b)^2}{4A_{\tau_2\tau_2}}}.
$$
(3) If there exist some $1\leq \tau_1<\tau_2\leq \tau$ such that $A_{\tau_1\tau_1}\neq 0$, $A_{\tau_1\tau_2}^2-4A_{\tau_1\tau_1}A_{\tau_2\tau_2}=0$, $A_{ij}=0$ otherwise, $F(x_1,\cdots,x_{\tau})$ satisfies \eqref{F{12}} and $\alpha_1\neq 0$, then $f(x)$ is weakly regular bent and its Walsh transform is
$$\widehat{f}(b)=\mu^{-1} p^{n/2}\eta(-A_{\tau_1\tau_1}\alpha_1)\omega^{\widetilde{g}(b)+\alpha_3-\alpha_2^2/(4\alpha_1)},$$
where $\alpha_{i}$ are given by \eqref{ai}.\\
(4) If there exist some $1\leq \tau_1< \tau_2\leq \tau$ such that $A_{\tau_1 \tau_2}^2-4A_{\tau_1 \tau_1}A_{\tau_2 \tau_2}\neq 0$, $A_{ij}=0$ otherwise, $F(x_1,\cdots,x_{\tau})$ satisfies \eqref{F{12}} and $B_{3}^2-4B_{1}B_{2}\neq 0$, then $f(x)$ is weakly regular bent and its Walsh transform is
$$\widehat{f}(b)=\mu^{-1} p^{n/2}\eta (A_{\tau_1\tau_2}^{2}-4 A_{\tau_1\tau_1}A_{\tau_2\tau_2} )\eta (4B_{1}B_{2}-B_{3}^2 )\omega^{\widetilde{g}(b)
+\frac{B_{1}\beta_{2}^2+B_{2}\beta_{1}^2-B_{3}\beta_{1}\beta_{2}}{B_{3}^2-4B_{1}B_{2}}+\beta_{3}},$$
where $B_{i}$, $\beta_{i}$ are given by  \eqref{Bi} and \eqref{betai} respectively.
\end{thm}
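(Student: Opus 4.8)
The plan is to read off all four statements from the expression \eqref{f^b} for $\widehat f(b)$, which is already in hand once $g$ is weakly regular bent with dual satisfying \eqref{g-Aij}. In each part the hypothesis is that every $A_{ij}$ vanishes except those whose index pair lies in a set $\Gamma$ (notation \eqref{T}) with $\#\Gamma\le2$. So the first step is to isolate, inside \eqref{f^b}, the sums over the $t_i$ with $i\notin\Gamma$: each of these is $\sum_{t_i=0}^{p-1}\omega^{(g_i(b)-x_i)t_i}$, equal to $p$ if $x_i=g_i(b)$ and $0$ otherwise. This kills all $x$-variables outside $\Gamma$, replaces $F$ by its restriction $F|_{x_i=g_i(b),\,i\notin\Gamma}$, and leaves $\widehat f(b)=\mu^{-1}p^{n/2}\omega^{\widetilde g(b)}$ times a sum over at most two $x$-variables and the matching $t$-variables.

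For part (1) one has $\Gamma=\varnothing$, every $t_i$-sum collapses, and what remains is $\widehat f(b)=\mu^{-1}p^{n/2}\omega^{\widetilde g(b)+F(g_1(b),\dots,g_\tau(b))}$, which is already of the stated form, so $f$ is weakly regular bent. For parts (2)--(4) the inner double sum over $t_{\tau_1},t_{\tau_2}$ is exactly the two-variable quadratic exponential sum $H$ of Lemma \ref{expH.k=2}, with coefficients $A_{\tau_1\tau_1},A_{\tau_2\tau_2},A_{\tau_1\tau_2}$ and $g_{\tau_1}(b)-x_{\tau_1},\,g_{\tau_2}(b)-x_{\tau_2}$. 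The three branches of that lemma produce the three parts: when the quadratic part degenerates, $A_{\tau_1\tau_2}^2-4A_{\tau_1\tau_1}A_{\tau_2\tau_2}=0$, the sum is nonzero only under a linear side condition on the $x$'s---namely $x_{\tau_1}=g_{\tau_1}(b)$ if $A_{\tau_1\tau_1}=A_{\tau_1\tau_2}=0$, which is part (2), or $x_{\tau_2}=\frac{A_{\tau_1\tau_2}}{2A_{\tau_1\tau_1}}x_{\tau_1}+\varphi(b)$ if $A_{\tau_1\tau_1}\neq0$, which is part (3); when the quadratic part is non-degenerate it contributes a clean Gauss factor $p\,\eta(\Delta)\omega^{(\cdots)}$ with no side condition, which is part (4).

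In the two degenerate branches I then substitute the side condition and use that \eqref{F{1}} (resp. \eqref{F{12}}) already encodes $\deg\big(F|_{x_i=g_i(b),\,i\notin\Gamma}\big)\le2$; the leftover sum is then a single quadratic character sum in $x_{\tau_2}$ (resp. $x_{\tau_1}$) with leading coefficient $a_2-\frac1{4A_{\tau_2\tau_2}}$ (resp. $\alpha_1$ from \eqref{ai}), so Lemma \ref{char sum(ax^2+bx+c)} applies and yields $|\widehat f(b)|=p^{n/2}$ precisely when that coefficient is nonzero, while otherwise the $x$-sum is $0$ and $f$ is not bent. In the non-degenerate branch the leftover sum over $x_{\tau_1},x_{\tau_2}$ is again of the form $H$, with quadratic coefficients $B_1,B_2,B_3$ from \eqref{Bi} and linear coefficients $\beta_1,\beta_2,\beta_3$ from \eqref{betai}, and Lemma \ref{expH.k=2} forces $B_3^2-4B_1B_2\neq0$ for bentness. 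In every surviving case the stated value of $\widehat f(b)$ is read directly off the lemma output, using $p^*=\eta(-1)p$ to normalise signs; since each $\eta$-factor that appears is a constant independent of $b$, $\widehat f(b)$ has the shape $\nu^{-1}p^{n/2}\omega^{\widetilde f(b)}$ for a fixed unit $\nu$, so $f$ is weakly regular.

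The substance of the argument will be the bookkeeping rather than any single idea: one has to substitute the affine relation $x_{\tau_2}=\frac{A_{\tau_1\tau_2}}{2A_{\tau_1\tau_1}}x_{\tau_1}+\varphi(b)$ (or $x_{\tau_1}=g_{\tau_1}(b)$) into the quadratic \eqref{F{1}}/\eqref{F{12}}, expand, collect coefficients, and check that they match $\alpha_i$ of \eqref{ai}, $B_i$ of \eqref{Bi}, and $\beta_i$ of \eqref{betai} exactly---routine but easy to slip on. A secondary point worth flagging is that the degree restriction built into \eqref{F{1}}/\eqref{F{12}} is essential: without it the leftover $x$-sum need not be a quadratic character sum, neither Lemma \ref{expH.k=2} nor Lemma \ref{char sum(ax^2+bx+c)} applies, and the theorem genuinely covers only this degree regime.
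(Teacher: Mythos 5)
Your proposal is correct and follows essentially the same route as the paper: starting from \eqref{f^b}, collapsing the $t_i$-sums for $i\notin\Gamma$, evaluating the inner $(t_{\tau_1},t_{\tau_2})$-sum via Lemma \ref{expH.k=2} in its degenerate and non-degenerate branches, substituting the resulting side conditions into \eqref{F{1}}/\eqref{F{12}}, and finishing the remaining $x$-sum with Lemma \ref{char sum(ax^2+bx+c)} or Lemma \ref{expH.k=2}, which is exactly the case analysis the paper carries out before stating the theorem. The only caveat is a loose intermediate normalisation (the factor after collapsing is $p^{n/2-2}$, not $p^{n/2}$) and the deferred coefficient bookkeeping matching \eqref{ai}, \eqref{Bi}, \eqref{betai}, both of which you flag and which do not affect the argument.
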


\begin{remark}
Notice that the item $A_{ii}t_i^2$ will be reduced to $A_{ii}t_i$ if $p=2$ and for this case the result in (1) of Theorem \ref{thm.bent g(x)} still holds, which has been studied in \cite{T}. For an odd prime $p$, the result in (1) of Theorem \ref{thm.bent g(x)} has been obtained in \cite[Theorem 3.2]{QTZF}.
\end{remark}

\begin{example}
Let $p=3$, $n=4$, $\tau=3$, $\xi$ be a primitive element of $\mathbb{F}_{3^4}^{*}$ and $g(x)=\operatorname{Tr}(x^{2})$. From \cite{HK} we know that
$\widehat{g}(b)=-3^{2}\omega^{-\operatorname{Tr}(b^{2})}$ for $b\in\mathbb{F}_{3^4}$ and $\widetilde{g}(x)$ satisfies \eqref{g-Aij} with $A_{ii}=-\operatorname{Tr}(u_i^{2})$, $A_{ij}=\operatorname{Tr}(u_iu_j)$ if $i< j$ and $g_i(x)=-\operatorname{Tr}(u_ix)$. Taking $u_3=\xi^{53}$ and $F(x_1,x_2,x_3)=x_1x_3^2+x_2x_3$, then $A_{33}=0$.
Then Magma shows
$$f(x)=\operatorname{Tr}(x^{2})+\operatorname{Tr}(u_1x)\operatorname{Tr}(\xi^{53}x)^2
+\operatorname{Tr}(u_2x)\operatorname{Tr}(\xi^{53}x)
$$
is a weakly regular bent function if one of the following conditions satisfies\\
(1) $u_1=u_2=\xi^{13}$. In this case $A_{11}=A_{22}=A_{12}=A_{13}=A_{23}=0$ and the Walsh transform of $f(x)$ is
$$\widehat{f}(b)=-3^2\omega^{-\operatorname{Tr}(b^{2})
-\operatorname{Tr}(\xi^{13}b)\operatorname{Tr}(\xi^{53}b)^2
+\operatorname{Tr}(\xi^{13}b)\operatorname{Tr}(\xi^{53}b)}.$$
(2) $u_1=\xi^{13}$ and $u_2=\xi^2$. This case gives $A_{22}=1$, $A_{11}=A_{12}=A_{13}=A_{23}=0$ and
$$F(x_1,x_2,x_{3})|_{x_1=-\operatorname{Tr}(\xi^{13}b),x_3=-\operatorname{Tr}(\xi^{53}b)}
=-\operatorname{Tr}(\xi^{53}b)x_2-\operatorname{Tr}(\xi^{13}b)\operatorname{Tr}(\xi^{53}b)^2,$$
which indicates $a_2=0$, $a_2+1/\operatorname{Tr}(\xi^4)=1\neq 0$, and then the Walsh transform of $f(x)$ is
$$\widehat{f}(b)=-3^2\omega^{-\operatorname{Tr}(b^{2})
+\operatorname{Tr}((\xi^2-\xi^{53})b)^2-\operatorname{Tr}(\xi^{13}b)
\operatorname{Tr}(\xi^{53}b)^2
-\operatorname{Tr}(\xi^2b)^2}.
$$
(3) $u_1=\xi^2$ and $u_2=\xi^7$. For this case, we have $A_{11}=A_{22}=A_{12}=1$ and $A_{13}=A_{23}=0$, which implies that $A_{12}^2-A_{11}A_{22}=0$ and
$$F(x_1,x_2,x_{3})|_{x_3=-\operatorname{Tr}(\xi^{53}b)}=\operatorname{Tr}(\xi^{53}b)^2x_1
-\operatorname{Tr}(\xi^{53}b)x_2.$$
Then
$\alpha_1=-1$, $\alpha_2 = \operatorname{Tr}(\xi^{53}b)^2+\operatorname{Tr}((\xi^{53}+\xi^{2})b)$, $\alpha_3= \operatorname{Tr}(\xi^{53}b)\operatorname{Tr}((\xi^{2}+\xi^{7})b)-\operatorname{Tr}(\xi^{2}b)^2$
and
$$\widehat{f}(b)=-3^2\omega^{-\operatorname{Tr}(b^{2})
+\operatorname{Tr}(\xi^{53}b)\operatorname{Tr}((\xi^{2}+\xi^{7})b)-\operatorname{Tr}(\xi^{2}b)^2
+(\operatorname{Tr}(\xi^{53}b)^2+\operatorname{Tr}((\xi^{53}+\xi^{2})b))^2}.$$
(4) $u_1=\xi^2$ and $u_2=\xi^9$. In this case we have $A_{11}=A_{12}=1$, $A_{22}=-1$, $A_{13}=A_{23}=0$, $\Delta=A_{12}^2-A_{11}A_{22}=-1$,
$B_1=1$, $B_2=-1$,  $B_3=1$, $\beta_1=\operatorname{Tr}(\xi^{53}b)^2+\operatorname{Tr}((\xi^{9}-\xi^{2})b)
$, $\beta_2 =\operatorname{Tr}((\xi^{9}+\xi^{2}-\xi^{53})b)$ and $\beta_3=\operatorname{Tr}(\xi^{2}b)^2-\operatorname{Tr}(\xi^{9}b)^2
+\operatorname{Tr}(\xi^{2}b)\operatorname{Tr}(\xi^{9}b)$.
Then $B_3^2-B_1B_2=-1\neq 0$ and the Walsh transform of $f(x)$ is
$$\widehat{f}(b)=-3^2\omega^{-\operatorname{Tr}(b^{2})+\beta_{1}^2-\beta_{2}^2
+\beta_{1}\beta_{2}+\beta_{3}}.$$
Computer experiments are consistent with our results in Theorem \ref{thm.bent g(x)}.
\end{example}

The algebraic degree of the weakly regular bent functions in Theorem \ref{thm.bent g(x)} can be determined for some $u_i$
by Lemma \ref{algebraic degree}.

\begin{prop}\label{algebraic degree.f=g+F}
Let $u_{1}, u_{2}, \cdots, u_{\tau}\in\mathbb{F}_{p^{n}}^{*}$ be linearly independent over $\mathbb{F}_{p},$ where $\tau$ is an integer with $2\leq\tau\leq n.$ Let $F(x_1,\cdots,x_{\tau})$ be a reduced polynomial in $\mathbb{F}_p[x_1,\cdots,x_{\tau}]$ with algebraic degree $d$. Then the algebraic degree of the weakly regular bent function constructed in Theorem \ref{thm.bent g(x)} is $\max \{d,\deg(g(x))\}$.
\end{prop}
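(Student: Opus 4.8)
The plan is to compute the algebraic degree of $f(x)=g(x)+F(\operatorname{Tr}(u_1x),\dots,\operatorname{Tr}(u_\tau x))$ by treating the two summands separately and showing there is no cancellation of the top-degree terms. First I would invoke Lemma \ref{algebraic degree}: since $u_1,\dots,u_\tau$ are linearly independent over $\mathbb{F}_p$ and $\deg(F)=d$, the univariate function $G(x):=F(\operatorname{Tr}(u_1x),\dots,\operatorname{Tr}(u_\tau x))$ has algebraic degree exactly $d$. Thus $\deg(f)\le\max\{d,\deg(g)\}$ trivially, and the whole issue is to rule out a drop in degree, i.e.\ to show $\deg(f)=\max\{d,\deg(g)\}$ rather than something smaller.

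Next I would separate into the two cases $d\ne\deg(g)$ and $d=\deg(g)$. If $d\ne\deg(g)$, write $D=\max\{d,\deg(g)\}$; exactly one of $g$ and $G$ has a nonzero homogeneous component of degree $D$ and the other has all components of degree $<D$, so the degree-$D$ part of $f=g+G$ is nonzero and $\deg(f)=D$. The only delicate case is $d=\deg(g)=:D$, where a priori the degree-$D$ homogeneous parts of $g$ and of $G$ could cancel. Here I would argue that this is impossible because the problem already guarantees $f$ is bent (it is the weakly regular bent function produced by Theorem \ref{thm.bent g(x)}), and moreover Theorem \ref{thm.bent g(x)} gives an explicit formula $\widehat f(b)=\mu^{-1}p^{n/2}\eta(\cdot)\omega^{\widetilde f(b)}$ with $\widetilde f$ written out in terms of $\widetilde g$, the $g_i$, and the coefficients of $F$; comparing algebraic degrees of $f$ and of its dual via Lemma \ref{algebraic degree. limit} (or, more directly, reading off $\deg(\widetilde f)$ from that explicit expression) pins the degree down. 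Alternatively, and perhaps more cleanly, I would note that $g$ is weakly regular bent, so by Lemma \ref{algebraic degree. limit} $\deg(g)\le\frac{(p-1)n}{2}$, and then use the structure of $g$ in each of the four sub-cases of Theorem \ref{thm.bent g(x)} to check directly that the top-degree term of $G$ (a monomial in the $\operatorname{Tr}(u_ix)$, which by Lemma \ref{algebraic degree} contributes genuine degree-$d$ monomials in the univariate ANF involving the $u_i$) cannot be matched by a term of $g$, because $g$'s top-degree part is fixed and independent of the choice of the $u_i$ while $G$'s depends on them.

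The step I expect to be the main obstacle is precisely this non-cancellation argument when $d=\deg(g)$: one has to make sure that no accidental algebraic identity forces the leading forms of $g$ and $G$ to coincide. The cleanest route is to exploit the freedom in $u_1,\dots,u_\tau$ together with Lemma \ref{algebraic degree}, which says $\deg(G)=d$ \emph{for every} linearly independent choice of the $u_i$; since $g$ is fixed, if cancellation occurred for one choice it would have to be a coincidence that cannot persist, and a short argument (e.g.\ looking at the coefficient of a single well-chosen monomial $x^{p^{i_1}+\dots+p^{i_d}}$ in the trace expansion of the top monomial of $F$, whose coefficient is a nonzero $\mathbb{F}_p$-linear-algebraic expression in products of the $u_i$, hence not identically equal to the corresponding fixed coefficient of $g$) closes the gap. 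Once non-cancellation is established in all cases, we conclude $\deg(f)=\max\{d,\deg(g)\}$, completing the proof.
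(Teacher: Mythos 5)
Your opening step is in fact the paper's entire argument: the paper offers no separate proof of this proposition, stating it as an immediate consequence of Lemma \ref{algebraic degree} (note the hedged lead-in ``can be determined for some $u_i$''), i.e.\ $\deg F(\operatorname{Tr}(u_1x),\dots,\operatorname{Tr}(u_\tau x))=d$ by that lemma and hence $\deg(f)=\max\{d,\deg(g)\}$, with no discussion of cancellation at all. So your first paragraph matches the paper, and you are right that $d=\deg(g)$ is the only case where anything delicate can happen; you go further than the paper by flagging it.

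The problem is that neither of your two proposed ways of closing that case is a proof. Bentness of $f$, the explicit dual formula in Theorem \ref{thm.bent g(x)}, and Lemma \ref{algebraic degree. limit} only yield \emph{upper} bounds on $\deg(f)$ (and on the degree of its dual); they give no lower bound, so they cannot exclude that the degree-$d$ homogeneous parts of $g(x)$ and of $F(\operatorname{Tr}(u_1x),\dots,\operatorname{Tr}(u_\tau x))$ cancel and $\deg(f)<d$. Your alternative argument --- exploiting the freedom in $u_1,\dots,u_\tau$ and asserting that a cancellation ``cannot persist'' --- establishes at most that non-cancellation holds for \emph{some} (or generic) linearly independent choices of the $u_i$; but in the proposition the $u_i$ are given and fixed, so this does not prove the stated claim for an arbitrary linearly independent tuple (it would only justify the paper's weaker ``for some $u_i$'' reading, which is apparently what the authors intend). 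As written, the non-cancellation step that you yourself identify as the main obstacle is left open, so the proposal is incomplete exactly where it attempts to go beyond the paper's unproved assertion; to fix it you would either have to restrict to $d\neq\deg(g)$ (or $d>\deg(g)$, as in the paper's subsequent maximal-degree example), or prove a concrete non-cancellation statement, e.g.\ by comparing coefficients of a specific exponent of $p$-weight $d$ in the univariate ANF, which your sketch gestures at but does not carry out.
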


The algebraic degree of $f(x)$ in Theorem \ref{thm.bent g(x)} can achieve the upper bound in Lemma \ref{algebraic degree. limit}. Let $u_{1}, u_{2}, \cdots, u_{\tau}\in\mathbb{F}_{p^{n}}^{*}$ be linearly independent over $\mathbb{F}_{p}$, where $2\leq\tau\leq n$. Let $f(x)$ be the weakly regular bent function generated by  Theorem \ref{thm.bent g(x)}, then the algebraic degree of $f(x)$ is $\frac{(p-1)n}{2}$ if one takes $p\geq 3$ and $F(x_1,\cdots,x_{\tau})=\prod_{i=1}^{\tau}x_i^{e_i}$ with $\sum_{i=1}^{\tau} e_i=\frac{(p-1)n}{2}$, where $2\leq\tau\leq n$, $0\leq e_i\leq p-1$.

Case II: $\#\Gamma>2$.

In this case, in order to guarantee $\deg(F(x_1,\cdots,x_{\tau})|_{x_i=g_i(b), i\not\in\Gamma})\leq2$ for a general $\Gamma$, we consider the case $\deg(F(x_1,\cdots,x_{\tau}))=2$.  Without loss of generality, assume that
\begin{equation}\label{f=g+x_ix_j}
f(x)=g(x)+\sum\nolimits_{1\leq i\leq j \leq \tau}a_{ij}\operatorname{Tr}(u_{i}x)\operatorname{Tr}(u_{j}x),
\end{equation}
where $a_{ij}\in\mathbb{F}_{p}$ are not all zero. Let $g(x)$ be a weakly regular bent function over $\mathbb{F}_{p^n}$ whose dual satisfies \eqref{g-Aij} with $A_{ij}=0$ for $i\neq j$ (the case $A_{ij}$ are not all zero when $i\neq j$ can be similarly considered), i.e.,
$$\widetilde{g}(x-\sum\nolimits_{i=1}^{\tau}u_it_i)=\widetilde{g}(x)
+\sum\nolimits_{i=1}^{\tau}A_{ii}t_i^2+\sum\nolimits_{i=1}^{\tau}g_i(x)t_i.
$$
 Then \eqref{f^b} is reduced to
\begin{equation}\label{f2}
\widehat{f}(b)=\mu^{-1} p^{n/2-\tau}\sum_{i,j=1}^{\tau} \sum_{x_{i}=0}^{p-1}\sum_{t_{j}=0}^{p-1}
\omega^{\sum\nolimits_{1\leq i\leq j \leq \tau}a_{ij}x_{i}x_{j}+\widetilde{g}(b)+\sum_{i=1}^{\tau}A_{ii}t_i^2+\sum_{i=1}^{\tau}  (g_i(b)-x_i )t_i}.
\end{equation}
Observe that if $A_{ii}=0$ for some $1\leq i\leq \tau$, say $A_{11}=0$, then $\widehat{f}(b)$ equals
$$\mu^{-1} p^{\frac{n}{2}-\tau+1}\sum_{i,j=2}^{\tau} \sum_{x_{i}=0}^{p-1}\sum_{t_{j}=0}^{p-1}
\omega^{\sum\limits_{2\leq i\leq j \leq \tau}a_{ij}x_{i}x_{j}+\widetilde{g}(b)+\sum\limits_{i=2}^{\tau}A_{ii}t_i^2+\sum\limits_{i=2}^{\tau} (g_i(b)-x_i )t_i}\cdot
\omega^{\sum\limits_{j=2}^{\tau}a_{1j}g_{1}(b)x_{j}+a_{11}g_{1}(b)^2} ,$$
which implies that $f(x)$ degenerates to the case
$$f(x)=g(x)+\sum\nolimits_{2\leq i\leq j \leq \tau}a_{ij}\operatorname{Tr}(u_{i}x_{i})\operatorname{Tr}(u_{j}x_{j}).$$
Thus, we next assume that $A_{ii}\neq 0$ for all $1\leq i\leq \tau$.

For simplicity, define $\gamma_{i}^{(1)}=a_{ii}-1/(4A_{ii})$, $\varrho_{i}^{(1)}=g_i(b)/(2A_{ii})$ for $1\leq i\leq \tau$, $\gamma_{i,j}^{(1)}=a_{ij}$ for $1\leq i<j\leq \tau$ and
\begin{eqnarray}\label{Gi}
  \gamma_{i}^{(k)}&=& \gamma_{i}^{(k-1)}- (\gamma_{k-1,i}^{(k-1)} )^2 {/}
 (4\gamma_{k-1}^{(k-1)} ),\nonumber \\
 \gamma_{i,j}^{(k)} &=& \gamma_{i,j}^{(k-1)}-\gamma_{k-1,i}^{(k-1)}\gamma_{k-1,j}^{(k-1)} {/} (2\gamma_{k-1}^{(k-1)} ), \\
 \varrho_{i}^{(k)}&=&\varrho_{i}^{(k-1)}-\gamma_{k-1,i}^{(k-1)}\varrho_{k-1}^{(k-1)} {/}
 (2\gamma_{k-1}^{(k-1)} ) \nonumber
\end{eqnarray}
 for $2\leq k\leq \tau$. Then, according to \eqref{f2} and Lemma \ref{char sum(ax^2+bx+c)},  one gets
$$\widehat{f}(b)=\mu^{-1} p^{n/2-\tau}\sqrt{p^*}^{\tau}\prod\nolimits_{i=1}^{\tau}\eta(A_{ii})\omega^{\widetilde{g}(b)}N,$$
where
$$\begin{aligned}
N&=\sum_{i=1}^{\tau} \sum_{x_{i}=0}^{p-1}
\omega^{\sum_{i=1}^{\tau} (-\frac{ (g_i(b)-x_i )^2}{4A_{ii}}+a_{ii}x_i^2 )
+\sum_{i=1}^{\tau-1}\sum_{j=i+1}^{\tau}a_{ij}x_{i}x_{j}}\\
&=\omega^{-\sum_{i=1}^{\tau}\frac{g_i(b)^2}{4A_{ii}}}\sum_{i=1}^{\tau} \sum_{x_{i}=0}^{p-1}
\omega^{\sum_{i=1}^{\tau-1} (\gamma_{i}^{(1)}x_{i}^2+ (\sum_{j=i+1}^{\tau}\gamma_{i,j}^{(1)}x_{j}
+\varrho_{i}^{(1)} )x_i )+\gamma_{\tau}^{(1)}x_{\tau}^2+\varrho_{\tau}^{(1)}x_{\tau}}.
\end{aligned}$$
If $\gamma_{1}^{(1)}\neq0$, then by Lemma \ref{char sum(ax^2+bx+c)} one obtains
$$N=\omega^{-\sum_{i=1}^{\tau}\frac{g_i(b)^2}{4A_{ii}}-
 (\varrho_{1}^{(1)})^2/4\gamma_{1}^{(1)}}\sqrt{p^*}\eta (\gamma_{1}^{(1)} )\sum_{i=2}^{\tau} \sum_{x_{i}=0}^{p-1}
\omega^{\sum_{i=2}^{\tau-1} (\gamma_{i}^{(2)}x_{i}^2+ (\sum_{j=i+1}^{\tau}\gamma_{i,j}^{(2)}x_{j}+
\varrho_{i}^{(2)} )x_i )+\gamma_{\tau}^{(2)}x_{\tau}^2+\varrho_{\tau}^{(2)}x_{\tau}},$$
 Therefore, if $\gamma_{i}^{(i)}\neq 0$ for all $1\leq i\leq \tau$, one can conclude that
$$N=\omega^{-\sum_{i=1}^{\tau}\frac{g_i(b)^2}{4A_{ii}}}N^{(\tau-1)}
\sum_{x_{\tau}=0}^{p-1}
\omega^{\gamma_{\tau}^{(\tau)}x_{\tau}^2+\varrho_{\tau}^{(\tau)}x_{\tau}}
=\omega^{-\sum_{i=1}^{\tau}\frac{g_i(b)^2}{4A_{ii}}}N^{(\tau)},$$
where  $N^{(k)}$, $1\leq k\leq \tau$ is defined as
$$N^{(k)}=\sqrt{p^*}^{k}\prod\nolimits_{i=1}^{k}\eta (\gamma_{i}^{(i)} )\omega^{-\sum_{i=1}^{k}
 (\varrho_{i}^{(i)} )^2 {/} (4\gamma_{i}^{(i)} )}.$$

\begin{thm}\label{thm.bent g(x)+x_ix_j}
Let $p$ be an odd prime and $u_{1}, u_{2}, \cdots, u_{\tau}\in\mathbb{F}_{p^{n}}^{*}$ with $\tau\geq2$.
Let $g(x)$ be a weakly regular bent function over $\mathbb{F}_{p^n}$ whose dual satisfies \eqref{g-Aij} with $A_{ij}=0$ for $i\neq j$ and $A_{ii}\neq 0$ for all $1\leq i\leq \tau$. Then the function $f(x)$ of the form \eqref{f=g+x_ix_j} is a weakly regular bent function if $\gamma_{i}^{(i)}\neq 0$ for all $1\leq i\leq \tau$. Moreover,  the Walsh transform of $f(x)$ is
$$\widehat{f}(b)=\mu^{-1} p^{n/2}\prod\nolimits_{i=1}^{\tau}\eta ((-1)^{\tau}A_{ii}\gamma_{i}^{(i)} )
\omega^{\widetilde{g}(b)-\sum_{i=1}^{\tau} (\frac{g_i(b)^2}{4A_{ii}}+ (\varrho_{i}^{(i)} )^2 {/}
 (4\gamma_{i}^{(i)} ) )},$$
where $\gamma_{i}^{(k)}$, $\gamma_{i,j}^{(k)}$ and $\varrho_{i}^{(k)}$ are given by \eqref{Gi} for $1\leq k\leq \tau$.
\end{thm}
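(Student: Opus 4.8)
The plan is to turn the computation sketched just before the statement into a rigorous induction, establishing only sufficiency (no converse is claimed). Since $g$ is weakly regular bent with $\widehat{g}(x)=\mu^{-1}p^{n/2}\omega^{\widetilde g(x)}$ and its dual obeys \eqref{g-Aij} with $A_{ij}=0$ for $i\ne j$, Theorem \ref{thm.wf.f=g+F} and \eqref{F1} give the expression \eqref{f2} for $\widehat f(b)$. The first step is to sum over $(t_1,\dots,t_\tau)$. Because no cross terms $t_it_j$ with $i\ne j$ appear, the inner sum factors as $\prod_{i=1}^{\tau}\sum_{t_i=0}^{p-1}\omega^{A_{ii}t_i^2+(g_i(b)-x_i)t_i}$, and since $A_{ii}\ne 0$ Lemma \ref{char sum(ax^2+bx+c)} evaluates each factor to $\eta(A_{ii})\sqrt{p^*}\,\omega^{-(g_i(b)-x_i)^2/(4A_{ii})}$. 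This yields $\widehat f(b)=\mu^{-1}p^{n/2-\tau}\sqrt{p^*}^{\,\tau}\prod_{i=1}^{\tau}\eta(A_{ii})\,\omega^{\widetilde g(b)}N$, where $N$ is the quadratic exponential sum over $(x_1,\dots,x_\tau)$ displayed before the theorem; expanding $(g_i(b)-x_i)^2$ and factoring out $\omega^{-\sum_i g_i(b)^2/(4A_{ii})}$ brings the exponent of $N$ into the canonical shape with diagonal coefficients $\gamma_i^{(1)}=a_{ii}-1/(4A_{ii})$, cross coefficients $\gamma_{i,j}^{(1)}=a_{ij}$, and linear coefficients $\varrho_i^{(1)}=g_i(b)/(2A_{ii})$.

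Next I would diagonalize $N$ by eliminating $x_1,x_2,\dots,x_\tau$ one at a time. The key is the inductive claim that for $1\le k\le\tau$, after summing over $x_1,\dots,x_{k-1}$,
$$N=\omega^{-\sum_{i=1}^{\tau}g_i(b)^2/(4A_{ii})}\,N^{(k-1)}\sum_{x_k,\dots,x_\tau}\omega^{\sum_{i=k}^{\tau-1}(\gamma_i^{(k)}x_i^2+(\sum_{j=i+1}^{\tau}\gamma_{i,j}^{(k)}x_j+\varrho_i^{(k)})x_i)+\gamma_\tau^{(k)}x_\tau^2+\varrho_\tau^{(k)}x_\tau},$$
with $N^{(0)}=1$, $N^{(k)}$ as in the statement, and $\gamma_i^{(k)},\gamma_{i,j}^{(k)},\varrho_i^{(k)}$ given by \eqref{Gi}. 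The inductive step isolates the terms involving $x_k$, namely $\gamma_k^{(k)}x_k^2+(\sum_{j>k}\gamma_{k,j}^{(k)}x_j+\varrho_k^{(k)})x_k$, and applies Lemma \ref{char sum(ax^2+bx+c)} with $a=\gamma_k^{(k)}$ (which is exactly why $\gamma_k^{(k)}\ne 0$ is assumed) and $b=\sum_{j>k}\gamma_{k,j}^{(k)}x_j+\varrho_k^{(k)}$. This produces the factor $\eta(\gamma_k^{(k)})\sqrt{p^*}\,\omega^{-(\varrho_k^{(k)})^2/(4\gamma_k^{(k)})}$, i.e. $N^{(k)}/N^{(k-1)}$, and the cross part $-\tfrac{1}{4\gamma_k^{(k)}}(\sum_{j>k}\gamma_{k,j}^{(k)}x_j)^2$ together with the mixed part $-\tfrac{\varrho_k^{(k)}}{2\gamma_k^{(k)}}\sum_{j>k}\gamma_{k,j}^{(k)}x_j$, which recombine with the surviving terms to give precisely the updated coefficients $\gamma_i^{(k+1)},\gamma_{i,j}^{(k+1)},\varrho_i^{(k+1)}$ of \eqref{Gi}. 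At $k=\tau$ the residual sum is $\sum_{x_\tau}\omega^{\gamma_\tau^{(\tau)}x_\tau^2+\varrho_\tau^{(\tau)}x_\tau}$, evaluated once more by Lemma \ref{char sum(ax^2+bx+c)} under $\gamma_\tau^{(\tau)}\ne 0$, so $N=\omega^{-\sum_i g_i(b)^2/(4A_{ii})}N^{(\tau)}$.

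Finally I would substitute back and clean up the constants. Using $\sqrt{p^*}^{\,2\tau}=(p^*)^\tau=\eta((-1)^\tau)p^\tau$ collapses $p^{n/2-\tau}\cdot p^\tau$ to $p^{n/2}$; merging $\eta((-1)^\tau)$ into $\prod_i\eta(A_{ii}\gamma_i^{(i)})$ and using $\tau^2\equiv\tau\pmod 2$ rewrites the $\eta$-prefactor as $\prod_{i=1}^{\tau}\eta((-1)^\tau A_{ii}\gamma_i^{(i)})$, which gives the stated formula for $\widehat f(b)$. Since $\eta$ takes values in $\{\pm1\}$, $|\widehat f(b)|=p^{n/2}$ for all $b$, so $f$ is bent; moreover \eqref{Gi} shows each $\gamma_i^{(i)}$ is independent of $b$ (only the $\varrho$-sequence depends on $b$, and it enters solely the $\omega$-exponent), hence the prefactor $\mu^{-1}\prod_i\eta((-1)^\tau A_{ii}\gamma_i^{(i)})$ is a fixed complex number of unit modulus and $f$ is weakly regular bent with dual $\widetilde f(b)=\widetilde g(b)-\sum_{i=1}^{\tau}\big(g_i(b)^2/(4A_{ii})+(\varrho_i^{(i)})^2/(4\gamma_i^{(i)})\big)$.

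The main obstacle will be the bookkeeping in the inductive step: one must verify that completing the square in $x_k$ leaves the remaining exponent in exactly the prescribed canonical form and that all three families of coefficients update according to \eqref{Gi} — in particular that the linear coefficient of each $x_i$ with $i>k$ correctly splits into its "$x$-dependent" contribution (feeding $\gamma_{i,j}^{(k+1)}$) and its "$b$-dependent" contribution (feeding $\varrho_i^{(k+1)}$), and that the hypotheses $\gamma_i^{(i)}\ne0$ are precisely what is needed to invoke Lemma \ref{char sum(ax^2+bx+c)} at each stage. Everything else is routine manipulation of Gauss sums.
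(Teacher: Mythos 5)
Your proposal is correct and takes essentially the same route as the paper: summing over the $t_i$ with Lemma \ref{char sum(ax^2+bx+c)} (using $A_{ii}\neq 0$), then iteratively completing squares in $x_1,\dots,x_\tau$ so that the coefficients update exactly according to \eqref{Gi}, and finally collecting the Gauss-sum constants via $(p^*)^{\tau}=\eta((-1)^{\tau})p^{\tau}$ and $\tau^2\equiv\tau\pmod 2$ to reach the stated formula. Your added remarks on organizing this as a formal induction and on why the $b$-independence of the $\gamma_i^{(i)}$ gives weak regularity are consistent with, and slightly more explicit than, the paper's derivation.
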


\begin{example}
Let $p=3$, $n=5$, $\tau=4$, $\xi$ be a primitive element of $\in \mathbb{F}_{3^5}^{*}$ and $g(x)=\operatorname{Tr}(x^{2})$. From \cite{HK} we know that $\widehat{g}(b)=(-3)^{5/2}\omega^{-\operatorname{Tr}(b^{2})}$ for $b\in\mathbb{F}_{3^5}$ and $\widetilde{g}(x)$ satisfies
\eqref{g-Aij} with $A_{ii}=-\operatorname{Tr}(u_i^{2})$,
$A_{ij}=\operatorname{Tr}(u_iu_j)$ if $i< j$ and $g_i(x)=-\operatorname{Tr}(u_ix)$.
Taking $u_1=\xi^2$, $u_2=\xi^5$, $u_3=\xi^4$, $u_4=\xi^{16}$, and $F(x_1,x_2,x_3,x_4)=x_1x_2+x_3x_4$, then $A_{11}=A_{33}=-1$, $A_{22}=A_{44}=1$, $A_{12}=A_{13}=A_{14}=A_{23}=A_{24}=A_{34}=0$, and from \eqref{Gi}, we have $\gamma_{1}^{(1)}=\gamma_{2}^{(2)}=\gamma_{3}^{(3)}=\gamma_{4}^{(4)}=1$,
$\varrho_{1}^{(1)}=-\operatorname{Tr}(\xi^2b)$, $\varrho_{2}^{(2)}=\operatorname{Tr}((\xi^5-\xi^2)b)$, $\varrho_{3}^{(3)}=-\operatorname{Tr}(\xi^4b)$, $\varrho_{4}^{(4)}=\operatorname{Tr}((\xi^{16}-\xi^4)b)$.
Magma shows that
$$f(x)=\operatorname{Tr}(x^{2})+\operatorname{Tr}(\xi^2x)\operatorname{Tr}(\xi^5x)
+\operatorname{Tr}(\xi^4x)\operatorname{Tr}(\xi^{16}x)$$
is bent over $\mathbb{F}_{3^5}$. Moreover, for $b \in\mathbb{F}_{3^5}$, the Walsh transform of $f(x)$ is
$$\widehat{f}(b)=(-3)^{5/2}\omega^{-\operatorname{Tr} (b^{2} )
-\operatorname{Tr} (\xi^5b )^2-\operatorname{Tr} ((\xi^5-\xi^2)b )^2
-\operatorname{Tr} (\xi^{16}b )^2-\operatorname{Tr} ((\xi^{16}-\xi^4)b )^2},$$
which is consistent with our result in Theorem \ref{thm.bent g(x)+x_ix_j}.
\end{example}

To end this section, we point out that more bent functions of the form \eqref{f=g+F} can be obtained from our results. The previous works in this direction focused on the constructions of bent functions satisfying \eqref{f=g+F} and \eqref{g} and new bent functions were obtained in \cite{M,XCX,XCX1,XCX2,WWL,T,QTZF}. Using similar techniques, the analogues of the results in above references can also be obtained for bent functions which satisfy \eqref{f=g+F} and \eqref{g-Aij}, we omit the details here.

\section{Constructions of bent functions from non-bent ones}\label{cons2}

In this section, we aim to construct the bent function $f(x)$ of the form \eqref{f=g+F} from a non-bent function $g(x)$, which is much more difficult than the case when $g(x)$ is a bent function. Let $\tau=2$ and $m,k$ be integers with $n=2m$ and $d=\gcd(k,n)$, we investigate the bentness of the functions having the form
\begin{equation}\label{f.g gold.k=2}
f(x)=\operatorname{Tr}(ax^{p^k+1})+\operatorname{Tr}(ux)\operatorname{Tr}(vx),
\end{equation}
where $a\in\mathbb{F}_{p^n}^{*}$ and $u, v\in\mathbb{F}_{p^n}^{*}$. We always assume that $u\neq v$ if $p=2$ since  $f(x)$ is reduced to $\operatorname{Tr}(ax^{p^k+1})+\operatorname{Tr}(ux)$ when  $p=2$.

\begin{thm}\label{p=2.n/d odd.f}
Let $p=2$, $n/d$ be odd and $u,v$ be two distinct elements in $\mathbb{F}_{2^n}^{*}$. Then the Boolean function $f(x)$ of the form \eqref{f.g gold.k=2} is bent if and only if $d=2$ and $\operatorname{Tr}^{n}_{2}(uc^{-1})\cdot\operatorname{Tr}^{n}_{2}(vc^{-1})
\cdot\operatorname{Tr}^{n}_{2}((u+v)c^{-1})\neq0$,
where $c\in\mathbb{F}_{2^n}^{*}$ is the unique element satisfying $c^{2^k+1}=a$.
\end{thm}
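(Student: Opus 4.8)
The plan is to reduce everything to the known Walsh spectrum of the Gold function. Writing $g(x)=\operatorname{Tr}(ax^{p^k+1})$ with $p=2$, Theorem~\ref{thm.wf.f=g+F} (with $\tau=2$ and $F(x_1,x_2)=x_1x_2$), as recorded in the remark following it, gives
$$\widehat{f}(b)=\tfrac12\big(\widehat{g}(b)+\widehat{g}(b+u)+\widehat{g}(b+v)-\widehat{g}(b+u+v)\big),\qquad b\in\mathbb{F}_{2^n}.$$
Since $n/d$ is odd, Lemma~\ref{W(g).odd} shows that each term $\widehat g(w)$ lies in $\{0,\pm 2^{(n+d)/2}\}$, and $\widehat g(w)\ne 0$ if and only if $\operatorname{Tr}^n_d(wc^{-1})=1$, where $c\in\mathbb{F}_{2^n}^{*}$ is the unique element with $c^{2^k+1}=a$ (it is well defined because $n/d$ odd forces $\gcd(2^k+1,2^n-1)=1$). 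Hence $\widehat f(b)$ is an integer multiple of $2^{(n+d)/2-1}$.

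First I would prove the necessity of $d=2$. As $n=2m$ is even while $n/d$ is odd, $d$ must be even, so $d\ge 2$. If $f$ is bent, then $|\widehat f(b)|=2^{n/2}$ for every $b$; being a nonzero multiple of $2^{(n+d)/2-1}$, this forces $(n+d)/2-1\le n/2$, i.e.\ $d\le 2$. Therefore $d=2$.

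Now assume $d=2$, so $2^{(n+d)/2}=2\cdot 2^{n/2}$ and $\widehat f(b)=2^{n/2}\,\Sigma(b)$, where $\Sigma(b)=\varepsilon_0+\varepsilon_1+\varepsilon_2-\varepsilon_3$ with each $\varepsilon_i\in\{-1,0,1\}$ and $\varepsilon_i=0$ exactly when the corresponding argument among $b,b+u,b+v,b+u+v$ fails $\operatorname{Tr}^n_2(\,\cdot\,c^{-1})=1$. Put $s=\operatorname{Tr}^n_2(bc^{-1})$, $\alpha=\operatorname{Tr}^n_2(uc^{-1})$, $\beta=\operatorname{Tr}^n_2(vc^{-1})\in\mathbb{F}_4$; as $b$ ranges over $\mathbb{F}_{2^n}$, $s$ ranges over all of $\mathbb{F}_4$. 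The number of indices $i$ with $\varepsilon_i\ne 0$ equals the multiplicity of $1$ in the multiset $\{s,s+\alpha,s+\beta,s+\alpha+\beta\}$, which is the translate by $s$ of $\{0,\alpha,\beta,\alpha+\beta\}$. I would then split on the additive subgroup $H=\langle\alpha,\beta\rangle$ of $\mathbb{F}_4$. If $|H|=4$ — equivalently $\alpha,\beta\ne 0$ and $\alpha\ne\beta$, hence (using $\operatorname{Tr}^n_2((u+v)c^{-1})=\alpha+\beta$ and that a product in $\mathbb{F}_4$ is nonzero iff all factors are) equivalent to $\operatorname{Tr}^n_2(uc^{-1})\cdot\operatorname{Tr}^n_2(vc^{-1})\cdot\operatorname{Tr}^n_2((u+v)c^{-1})\ne 0$ — then for every $s$ the four values are exactly the elements of $\mathbb{F}_4$, so precisely one $\varepsilon_i$ is nonzero, $|\Sigma(b)|=1$, and $f$ is bent. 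If $|H|\le 2$, then $\{0,\alpha,\beta,\alpha+\beta\}$, hence its translate, has every value occurring with even multiplicity, so the multiplicity of $1$ is even; then $\Sigma(b)$ has that same parity, so $\Sigma(b)$ is even and $|\widehat f(b)|\in\{0,\,2\cdot 2^{n/2},\,4\cdot 2^{n/2}\}$ for all $b$, whence $f$ is not bent. Combining the two subcases with the necessity of $d=2$ yields the stated equivalence.

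The main obstacle is that Lemma~\ref{W(g).odd} leaves the signs in $\pm 2^{(n+d)/2}$ unspecified, so $\Sigma(b)$ cannot be evaluated exactly; the point is that the signs are irrelevant here — when $|H|=4$ only a single term survives, and when $|H|\le 2$ a parity count on the number of surviving terms already forces $|\Sigma(b)|$ to be even. The remaining steps are routine: the divisibility bookkeeping that yields $d=2$, and the elementary verification that $|H|\le 2$ produces only the multiplicity patterns $\{0,0,0,0\}$ and $\{0,0,h,h\}$ in $\{0,\alpha,\beta,\alpha+\beta\}$.
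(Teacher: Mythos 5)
Your proposal is correct. It shares the paper's skeleton — the four-term decomposition $\widehat{f}(b)=\frac12\bigl(\widehat{g}(b)+\widehat{g}(b+u)+\widehat{g}(b+v)-\widehat{g}(b+u+v)\bigr)$ from Theorem~\ref{thm.wf.f=g+F} together with Coulter's spectrum in Lemma~\ref{W(g).odd}, and your sufficiency argument (when $d=2$ and the three traces are the three nonzero elements of $\mathbb{F}_4$, exactly one of the four shifted arguments has $\operatorname{Tr}^n_2(\cdot\,c^{-1})=1$, so exactly one term survives) is the same as the paper's. Where you genuinely diverge is the necessity direction. The paper rules out $d>2$, and rules out $d=2$ with a vanishing trace, by exhibiting a specific $b$ for which $\operatorname{Tr}^n_d((b+su+tv)c^{-1})\neq 1$ for all $s,t\in\mathbb{F}_2$ (since the at most four attainable translates form a proper subset of $\mathbb{F}_{2^d}$), forcing $\widehat f(b)=0$. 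You instead argue by $2$-adic size and parity: for $d>2$ every Walsh value of $f$ is an integer multiple of $2^{(n+d)/2-1}>2^{n/2}$, so it can never equal $\pm2^{n/2}$ in magnitude; and for $d=2$ with $\alpha\beta(\alpha+\beta)=0$, the even-multiplicity structure of the multiset $\{0,\alpha,\beta,\alpha+\beta\}$ forces $\Sigma(b)$ to be even for every $b$. Your route is a bit stronger and cleaner: it pins down the whole spectrum shape (values in $\{0,\pm2^{n/2+1},\pm2^{n/2+2}\}$ in the degenerate case) rather than just producing one zero, it handles all degenerate subcases uniformly (the paper's ``without loss of generality $\operatorname{Tr}^n_2(uc^{-1})=0$'' silently absorbs the case $\alpha=\beta\neq0$), and it makes explicit why the unknown signs $\pm2^{(n+d)/2}$ are irrelevant. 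The paper's constructive argument, on the other hand, extends verbatim to the $p$ odd/even-$n/d$ setting of Theorem~\ref{thm-non-p}, where a parity trick is not available.
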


\begin{proof}
Set $p=2$ and $\tau=2$, then for any $b\in\mathbb{F}_{2^n}$, Theorem \ref{thm.wf.f=g+F} yields
\begin{equation}\label{gold1}
 \widehat{f}(b)=\frac{1}{2} (\widehat{g}(b)+\widehat{g}(b+u)+\widehat{g}(b+v)-\widehat{g}(b+u+v) ).
\end{equation}
Notice that $\widehat{g}(b)\in\{0, \pm 2^{\frac{n+d}{2}}\}$ and $\widehat{g}(b)\ne 0$ if and only if $\operatorname{Tr}^{n}_{d}(bc^{-1})=1$ by Lemma \ref{W(g).odd}. Since $n$ is even and $n/d$ is odd, we have $d\geq 2$ is even.

(1): sufficiency. If $d=2$ and $\operatorname{Tr}^{n}_{2}(uc^{-1})\cdot\operatorname{Tr}^{n}_{2}(vc^{-1})
\cdot\operatorname{Tr}^{n}_{2} ((u+v)c^{-1} )\neq0$, then $\operatorname{Tr}^{n}_{2}(uc^{-1})$, $\operatorname{Tr}^{n}_{2}(vc^{-1})$ and $\operatorname{Tr}^{n}_{2} ((u+v)c^{-1} )$ are distinct nonzero elements. Suppose that $\theta$ is a primitive element of $\mathbb{F}_{2^2}$, one then gets
$$ \{\operatorname{Tr}^{n}_{2}(uc^{-1}), \operatorname{Tr}^{n}_{2}(vc^{-1}), \operatorname{Tr}^{n}_{2}((u+v)c^{-1}) \}=\{1,\theta,\theta^2\}.$$
Therefore, for $b\in\mathbb{F}_{2^n}$, one can conclude that
$$ \{\operatorname{Tr}^{n}_{2}(bc^{-1}),\operatorname{Tr}^{n}_{2}((b+u)c^{-1}), \operatorname{Tr}^{n}_{2}((b+v)c^{-1}), \operatorname{Tr}^{n}_{2}((b+u+v)c^{-1}) \}=\{0,1,\theta,\theta^2\}.$$
This together with Lemma \ref{W(g).odd} implies that
$$\widehat{f}(b)=\frac{1}{2}(\pm 2^{\frac{n+2}{2}}+3\cdot 0)=\pm2^m.$$

(2): necessity. Since $\operatorname{Tr}^{n}_{d}(bc^{-1})$ runs through $\mathbb{F}_{2^d}$ when $b$ ranges over $\mathbb{F}_{2^n}$, if $d>2$, one has
$$\{\operatorname{Tr}^n_{d}(uc^{-1})s+\operatorname{Tr}^n_{d}(vc^{-1})t: s,t\in\mathbb{F}_{2}\}\subsetneqq
\{\operatorname{Tr}^{n}_{d}(bc^{-1})+1: b\in\mathbb{F}_{2^n}\}.$$
This indicates that there exists $b\in\mathbb{F}_{2^n}$ such that $\operatorname{Tr}^n_{d} ((b+su+tv)c^{-1} )\neq 1$ for arbitrary $s,\,t\in\mathbb{F}_{2}$ which leads to $\widehat{f}(b)=0$ by \eqref{gold1} and then $f(x)$ is not bent. Now assume that $d=2$ and at least one of $\operatorname{Tr}^{n}_{2}(uc^{-1})$, $\operatorname{Tr}^{n}_{2}(vc^{-1})$ and $\operatorname{Tr}^{n}_{2}((u+v)c^{-1})$ is $0$, without loss of generality, let $\operatorname{Tr}^{n}_{2}(uc^{-1})=0$, then
$$\operatorname{Tr}^n_{2}((b+su+tv)c^{-1})=\operatorname{Tr}^n_{2}(bc^{-1})
+t\operatorname{Tr}^n_{2}(vc^{-1}).$$
Observe that there also exists $b\in\mathbb{F}_{2^n}$ such that $\operatorname{Tr}^n_{2}(bc^{-1})+t\operatorname{Tr}^n_{2}(vc^{-1})\neq 1$ for any $t\in\mathbb{F}_{2}$ since
$$\{\operatorname{Tr}^n_{d}(vc^{-1})t: t\in\mathbb{F}_{2}\}\subsetneqq
\{\operatorname{Tr}^{n}_{d}(bc^{-1})+1: b\in\mathbb{F}_{2^n}\},$$
i.e., there exists $b\in\mathbb{F}_{2^n}$ such that $\widehat{g} (b+su+tv )=0$ for any $s,\,t\in\mathbb{F}_{2}$. Then by \eqref{gold1} one has that $f(x)$ is not bent.
\end{proof}

\begin{example}
Let $p=2$, $n=6$, and $u,\,v$ be two distinct elements in $\mathbb{F}_{2^6}^{*}$. Suppose that $k$ is an integer and $d=\gcd(n,k)$. Taking $a=\xi^{2^k+1}$ and $c=\xi$, where $\xi$ is a primitive element of $\mathbb{F}_{2^6}$, then $c^{2^k+1}=a$.  Magma experiments show that
$f(x)=\operatorname{Tr}(\xi^{2^k+1}x^{2^k+1})+\operatorname{Tr}(ux)\operatorname{Tr}(vx)$ is bent
if and only if $k=2$ or $4$ and $\operatorname{Tr}^{6}_{2}(\xi^{-1}u)\cdot\operatorname{Tr}^{6}_{2}(\xi^{-1}v)
\cdot\operatorname{Tr}^{6}_{2}(\xi^{-1}(u+v))\neq0$.
\end{example}

\begin{thm}\label{thm-non-p}
Let $p$ be an arbitrary prime and $a, u, v\in\mathbb{F}_{p^n}^{*}$.
If $n/d$ is even and $a^{\frac{p^n-1}{p^d+1}}=(-1)^{\frac{m}{d}}$, then the function $f(x)$ of the form \eqref{f.g gold.k=2} is bent if and only if $d=1$, $\operatorname{Tr}^n_{2}(v/(ac^{p^k}))\neq0$ and $\frac{\operatorname{Tr}^n_{2}(u/(ac^{p^k}))}{\operatorname{Tr}^n_{2}(v/(ac^{p^k}))}\notin\mathbb{F}_{p}$, where $c$ satisfies $a^{p^k}c^{p^{2k}}+ac=0$.
\end{thm}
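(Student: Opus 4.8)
The plan is to reduce the bentness of $f$ to a surjectivity condition on an $\mathbb{F}_p$-linear map, by combining the Walsh-transform formula of Theorem~\ref{thm.wf.f=g+F} with the evaluation of the Walsh transform of the Gold function $g(x)=\operatorname{Tr}(ax^{p^k+1})$ recalled in Lemma~\ref{W(g).even}. First I would specialize Theorem~\ref{thm.wf.f=g+F} to $\tau=2$, $F(x_1,x_2)=x_1x_2$, $u_1=u$, $u_2=v$, which gives, for every $b\in\mathbb{F}_{p^n}$,
\begin{equation*}
\widehat{f}(b)=\frac{1}{p}\sum_{t_1=0}^{p-1}\sum_{x_1=0}^{p-1}\omega^{-x_1t_1}\,\widehat{g}(b-t_1u-x_1v).
\end{equation*}
Since $n/d$ is even, $n=2m$, and $a^{(p^n-1)/(p^d+1)}=(-1)^{m/d}$, Lemma~\ref{W(g).even} tells us that $\widehat{g}(b')=0$ unless $\operatorname{Tr}^n_{2d}\!\big(b'/(ac^{p^k})\big)=0$, in which case $|\widehat{g}(b')|=p^{m+d}$, where $c\in\mathbb{F}_{p^n}^{*}$ satisfies $a^{p^k}c^{p^{2k}}+ac=0$. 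Setting $w=1/(ac^{p^k})$, $\alpha=\operatorname{Tr}^n_{2d}(uw)$, $\beta=\operatorname{Tr}^n_{2d}(vw)$ and $\gamma=\operatorname{Tr}^n_{2d}(bw)$, the nonzero summands above are exactly those indexed by the pairs $(t_1,x_1)\in\mathbb{F}_p^2$ with $t_1\alpha+x_1\beta=\gamma$ in $\mathbb{F}_{p^{2d}}$; and as $b$ runs over $\mathbb{F}_{p^n}$ the value $\gamma$ runs over all of $\mathbb{F}_{p^{2d}}$ (since $2d\mid n$ and the relative trace is onto, while $b\mapsto bw$ is a bijection). So everything hinges on the $\mathbb{F}_p$-linear map $\Phi\colon\mathbb{F}_p^2\to\mathbb{F}_{p^{2d}}$, $\Phi(t_1,x_1)=t_1\alpha+x_1\beta$.

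For the necessity direction I would argue: if $\Phi$ is not onto, choose $b$ with $\gamma\notin\operatorname{Im}\Phi$; then every summand in the displayed formula vanishes, so $\widehat{f}(b)=0$ and $f$ is not bent. Hence $\Phi$ must be surjective, which by a dimension count forces $2d=\dim_{\mathbb{F}_p}\mathbb{F}_{p^{2d}}\le 2$, i.e. $d=1$; and then, comparing cardinalities, surjectivity of $\Phi$ is equivalent to injectivity, i.e. to $\alpha$ and $\beta$ being $\mathbb{F}_p$-linearly independent in $\mathbb{F}_{p^2}$, which is precisely the stated condition $\operatorname{Tr}^n_2(vw)\ne0$ and $\operatorname{Tr}^n_2(uw)/\operatorname{Tr}^n_2(vw)\notin\mathbb{F}_p$. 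For the sufficiency direction, assume $d=1$, $\beta\ne0$ and $\alpha/\beta\notin\mathbb{F}_p$; then $\{1,\alpha/\beta\}$ is an $\mathbb{F}_p$-basis of $\mathbb{F}_{p^2}$, so $\Phi$ is a bijection $\mathbb{F}_p^2\to\mathbb{F}_{p^2}$, and for each $b$ there is a unique pair $(t_1^{*},x_1^{*})$ giving a nonzero summand. Consequently $|\widehat{f}(b)|=\tfrac1p\,|\widehat{g}(b-t_1^{*}u-x_1^{*}v)|=\tfrac1p\,p^{m+1}=p^{m}=p^{n/2}$ for every $b$, so $f$ is bent.

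The step I expect to require the most care is the necessity direction: one has to be sure that the only way bentness fails from ``$\Phi$ not onto'' is through a genuinely vanishing Walsh coefficient (clear here, since non-onto forces an empty set of nonzero summands), and that the single dimension count eliminates every $d\ge 2$ at once, so that no subtle cancellation among several terms of modulus $p^{m+d}$ can accidentally restore $|\widehat{f}(b)|=p^{m}$ for all $b$. The remainder — tracking the constant $w=1/(ac^{p^k})$, the exponent $m/d$ versus $m$ in the case $d=1$, and noting that the standing convention $u\ne v$ for $p=2$ makes $\Phi$ automatically non-injective and hence is consistent with the excluded degenerate case — is routine bookkeeping built on top of Lemma~\ref{W(g).even}.
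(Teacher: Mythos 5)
Your proposal is correct and follows essentially the same route as the paper: specialize Theorem~\ref{thm.wf.f=g+F} to $\tau=2$, use Lemma~\ref{W(g).even} to see that a summand $\widehat{g}(b-t_1u-x_1v)$ is nonzero (of modulus $p^{m+d}$) exactly when $\operatorname{Tr}^n_{2d}\bigl((b-t_1u-x_1v)/(ac^{p^k})\bigr)=0$, and then count solutions; your packaging of the paper's two necessity cases ($d>1$, and $d=1$ with $\mathbb{F}_p$-dependence of the two traces) into surjectivity of the single linear map $\Phi$ plus a dimension count is just a tidier presentation of the same argument. The only blemish is the garbled side remark about $u\neq v$ for $p=2$, which plays no role in the proof.
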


\begin{proof}
If $n/d$ is even and $a^{\frac{p^n-1}{p^d+1}}=(-1)^{\frac{m}{d}}$, then by Lemma \ref{W(g).even} for $b\in \mathbb{F}_{p^n}$ one has that $$\widehat{g}(b)=(-1)^{m/d+1}p^{m+d}\overline{\chi}(ax_0^{p^k+1})$$
if $\operatorname{Tr}^n_{2d}(b/(ac^{p^k}))=0$ and $\widehat{g}(b)=0$ otherwise, where $x_0$ is the solution of $a^{p^k}x^{p^{2k}}+ax=-b^{p^k}$. Set $\tau=2$, then Theorem \ref{thm.wf.f=g+F} yields
\begin{equation}\label{gold2}
\widehat{f}(b)=\frac{1}{p}\sum_{s=0}^{p-1}\sum_{t=0}^{p-1}
\omega^{-st}\widehat{g}(b+su+tv), \forall\; b\in\mathbb{F}_{p^n}.
\end{equation}

(1): sufficiency. If $d=1$, $\operatorname{Tr}^n_{2}(v/(ac^{p^k}))\neq0$ and $\frac{\operatorname{Tr}^n_{2}(u/(ac^{p^k}))}{\operatorname{Tr}^n_{2}(v/(ac^{p^k}))}\notin\mathbb{F}_{p}$, then
$$\operatorname{Tr}^n_{2} ((su+tv)/(ac^{p^k}) )=s\operatorname{Tr}^n_{2}(u/(ac^{p^k}))
+t\operatorname{Tr}^n_{2}(v/(ac^{p^k}))\neq0$$
when $s,\,t\in\mathbb{F}_{p}$ are not 0 simultaneously. This implies that $\operatorname{Tr}^n_{2}((su+tv)/(ac^{p^k}))=0$ only if $(s,\,t)=(0,\,0)$ and $\operatorname{Tr}^n_{2}((su+tv)/(ac^{p^k}))$ are distinct for each pair $(s,t)$ when $s,\,t$ range from 0 to $p-1$. Suppose that $\theta$ is a primitive element of $\mathbb{F}_{p^2}$, one gets
$$\{\operatorname{Tr}^n_{2} ( (su+tv )/(ac^{p^k}) ): 0\leq s,\,t\leq p-1\}= \{0,\,1,\,\ldots,\theta^{p^2-2} \}.$$
Then, for any $b\in\mathbb{F}_{p^n}$, there exists exactly one pair $(s,t)$ such that $\operatorname{Tr}^n_{2}((b+su+tv)/(ac^{p^k}))=0$ when $s,\,t$ run over $\mathbb{F}_{p}$, set as $(s',\,t')$.
Hence, according to Lemma \ref{W(g).even} and \eqref{gold2}, one can claim
$$\widehat{f}(b)=\frac{1}{p}((-1)^{m+1}p^{m+1}\omega^{-s't'}\overline{\chi}(ax_0^{p^k+1})+(p-1)\cdot 0)=(-1)^{m+1}p^{m}\omega^{-s't'}\overline{\chi}(ax_0^{p^k+1}),$$
where $x_0$ is the solution of $a^{p^k}x^{p^{2k}}+ax=- (b+s'u+t'v )^{p^k}$,
i.e., $f(x)$ is bent.

(2): necessity. If $d>1$, then by the property of trace functions, we have
$$\{\operatorname{Tr}^n_{2d}(u/(ac^{p^k}))s+\operatorname{Tr}^n_{2d}(v/(ac^{p^k}))t: s,t\in\mathbb{F}_{p}\}\subsetneqq
\{-\operatorname{Tr}^{n}_{2d}(b/(ac^{p^k})): b\in\mathbb{F}_{p^n}\},$$
which indicates that there exists $b\in\mathbb{F}_{p^n}$ such that $\operatorname{Tr}^n_{2d} ((b+su+tv)/(ac^{p^k}) )\neq 0$ for any $s,\,t\in\mathbb{F}_{p}$. For this $b$, by Lemma \ref{W(g).even} and \eqref{gold2}, one gets $\widehat{f}(b)=0$. Thus, $f(x)$ cannot be bent if $d>1$. Now we assume that $d=1$ and $\operatorname{Tr}^n_{2}(u/(ac^{p^k}))=r\operatorname{Tr}^n_{2}(v/(ac^{p^k}))$ for some $r\in\mathbb{F}_{p}$. Then
$$\operatorname{Tr}^n_{2}((b+su+tv)/(ac^{p^k}))=\operatorname{Tr}^n_{2}(b/(ac^{p^k}))
+(sr+t)\operatorname{Tr}^n_{2}(v/(ac^{p^k})).$$
Since
$$\{(sr+t)\operatorname{Tr}^n_{2}(v/(ac^{p^k})): s,t\in\mathbb{F}_{p}\}\subsetneqq
\{-\operatorname{Tr}^n_{2}(b/(ac^{p^k})): b\in\mathbb{F}_{p^n}\},$$
then there exists $b\in\mathbb{F}_{p^n}$ such that $\operatorname{Tr}^n_{2d} ((b+su+tv)/(ac^{p^k}) )\neq 0$ for arbitrary $s,\,t\in\mathbb{F}_{p}$, and consequently, for this $b$, we have $\widehat{f}(b)=0$ by \eqref{gold2} which implies that $f(x)$ is not bent.
\end{proof}

\begin{example}
Let $p=3$, $n=2m=4$ and $u,\,v\in\mathbb{F}_{3^4}^{*}$. Let $k$ be an integer with $1\leq k\leq 4$ and $d=\gcd(n,k)$. Taking $a=1$ and $c=\xi$ with $\xi^{3^{2k}-1}=-1$, where $\mathbb{F}_{3^4}^*=\langle\xi\rangle$, then $a^{3^k}c^{3^{2k}}+ac=0$.  Magma experiments show that
$f(x)=\operatorname{Tr}(x^{3^k+1})+\operatorname{Tr}(ux)\operatorname{Tr}(vx)$ is bent
if and only if $k=1$ or 3 and $\operatorname{Tr}^{4}_{2}(\xi^{-p^k}u)/\operatorname{Tr}^{4}_{2}(\xi^{-p^k}v)\notin\mathbb{F}_{3}$.
\end{example}

\begin{remark}
The construction of the bent function $f(x)$ with the form \eqref{f=g+F} from a non-bent function $g(x)$ is much more difficult when $\tau\geq 3$. Our computer experiments indicate that such bent functions indeed exist and then it will be interesting to find an efficient way to construct this kind of bent functions.
\end{remark}

\begin{table}[ht]
\setlength{\belowcaptionskip}{-0.2cm}
\caption{Known bent functions over $\mathbb{F}_{p^n}$ with the form \eqref{f=g+F}} \label{Tab-f}
\footnotesize
\begin{center}
\setlength{\tabcolsep}{0.3mm}{
\begin{tabular}{|c|c|c|c|}
\hline \text { $p$ } & \text { $g(x)$ is bent with $\widetilde{g}(x)$ satisfies \eqref{g-Aij}} & \text { $F(x_1,\cdots,x_{\tau})$,\,$\tau\geq2$ } & \text { Refs. }\\
\hline
\hline $p=2$ & $A_{12}=0$ & $F(x_1,x_2)=x_1x_2$& \cite{M}\\
\hline $p=3$ & $A_{11}=0,\,A_{12}=0$ & $F(x_1,x_2)=x_1x_2$& \cite{XCX1}\\
\hline $p=3$ & $(A_{12}-1)^2\neq A_{11}A_{22}$& $F(x_1,x_2)=x_1x_2$& \cite{XCX2}\\
\hline $p=2$ & $A_{12}=A_{13}=A_{23}=0$& $F(x_1,x_2,x_3)=x_1x_2x_3$ &\cite{XCX}\\
\hline $p=2$ & $A_{ij}=0,\,1\leq i< j\leq \tau$ & $F(x_1,\cdots,x_{\tau})=x_1\cdots x_{\tau}$& \cite{WWL}\\
\hline $p=2$ & $A_{ij}=0,\,1\leq i< j\leq \tau$ &{\rm any}& \cite{T}\\
\hline odd $p$ & $A_{ij}=0,\,1\leq i\leq j\leq \tau$&  {\rm any}& \cite{QTZF}\\
\hline $p=2$ & $A_{ij}\neq0,\,i,j\in\Gamma,\,\#\Gamma=2\ell$, otherwise $A_{ij}=0$&$\deg(F(x_1,\cdots,x_{\tau})|_{x_i=h_i, i\not\in \Gamma})\leq1$& Thm.\,\ref{p=2.F1}\\
\hline $p=2$ & $A_{i_1j_s}\neq0,\,s=1,\cdots,\ell_1$, otherwise $A_{ij}=0$&
$\deg(F(x_1,\cdots,x_{\tau})|_{x_i=h_i, i\not\in \Gamma, x_{j_s}=x_{j_1}+h_{j_1}+h_{j_s}, s\neq1})\leq 1$
&Thm.\,\ref{p=2.F2}\\
\hline odd $p$ & $(A_{12}-1)^2-4A_{11}A_{22}\neq 0$& $F(x_1,x_2)=x_1x_2$ &Prop.\,\ref{cor.bent g(x).k=2}\\
\hline odd $p$ & $A_{\tau_1\tau_1},A_{\tau_2\tau_2},A_{\tau_1\tau_2}\in\mathbb{F}_{p}$, otherwise $A_{ij}=0$ &$ \deg(F(x_1,\cdots,x_{\tau})|_{x_i=g_i(b), i\not\in \Gamma})\leq2$ &Thm.\,\ref{thm.bent g(x)}\\
\hline odd $p$ & $A_{ii}\neq0,\,i=1,\cdots,\tau$, otherwise $A_{ij}=0$ &$F(x_1,\cdots,x_{\tau})=\sum_{1\leq i\leq j \leq \tau}a_{ij}x_ix_j$&Thm.\,\ref{thm.bent g(x)+x_ix_j}\\
\hline
\hline \text { $p$ } & \text { $g(x)$ is non-bent} & \text { $F(x_1,\cdots,x_{\tau})$,\,$\tau\geq2$ } & \text { Refs. }\\
\hline
\hline $p=2$ & $\operatorname{Tr}(ax^{2^k+1})$, $n/\gcd(k,n)$ is odd& $F(x_1,x_2)=x_1x_2$& Thm.\,\ref{p=2.n/d odd.f}\\
\hline any $p$ & $\operatorname{Tr}(ax^{p^k+1})$, $n/\gcd(k,n)$ is even& $F(x_1,x_2)=x_1x_2$ &Thm.\,\ref{thm-non-p}\\
\hline
\end{tabular}}
\end{center}
- where $A_{ij}$, $g_i(b)$ are defined by \eqref{g-Aij} with $h_i=g_i(b)+A_{ii}$,  $\Gamma$ and $F(x_1,\cdots,x_{\tau})$ are given by \eqref{T} and \eqref{F(x_1,...,x_n)}.
\end{table}

\section{Conclusion}\label{conc}

In this paper, we investigated the bentness of the function $f(x)$ over the finite field $\mathbb{F}_{p^n}$ with the form \eqref{f=g+F} by using different kinds of $g(x)$ as before (see Table \ref{Tab-f}), where $n$ is a positive integer and $p$ is a prime. We firstly obtained a generic result on the Walsh transform of the function $f(x)$, which generalized some previous works, and then characterized its bentness for the case $g(x)$ is bent for $p=2$ and $p>2$ respectively. It was shown that bent functions with the maximal algebraic degree can be obtained from our construction. Moreover, we presented a class of bent functions $f(x)$ of the form \eqref{f=g+F} when $g(x)$ is a non-bent Gold function.

\section*{Acknowledgements}

This work was supported in part by the National Natural Science Foundation of China (Nos. 62072162, 61761166010, 12001176, 61702166), the Application Foundation Frontier Project of Wuhan Science and Technology Bureau (No. 2020010601012189) and the National Key Research and Development Project (No. 2018YFA0704702).

\end{document}